\def\BibTeX{{\rm B\kern-.05em{\sc i\kern-.025em b}\kern-.08em T\kern-.1667em\lower.7ex\hbox{E}\kern-.125emX}}
\crefname{claim}{Claim}{Claims}
\newcommand{\tikznode}[2]{%
	\ifmmode%
	\tikz[remember picture,baseline=(#1.base),inner sep=0pt] \node (#1) {$#2$};
	\else
	\tikz[remember picture,baseline=(#1.base),inner sep=0pt] \node (#1) {#2};%
	\fi}
\definecolor{darkgreen}{rgb}{0,.39,0}
\definecolor{darkred}{rgb}{0.4,0,0}
\newtheorem{theorem}{\bf Theorem}
\newtheorem{lemma}{\bf Lemma}
\newtheorem{claim}{\bf Claim}
\newtheorem{corollary}{\bf Corollary}
\newtheorem{counterexample}{\bf Counterexample}
\definecolor{dnared}{rgb}{1,.27,0}  
\definecolor{dnablue}{rgb}{0,1,1}
\definecolor{dnapurple}{rgb}{0.82,0.13,0.56}
\definecolor{dnagreen}{rgb}{0,1,0}
\newcommand{\bfc}{{\boldsymbol c}}
\newcommand{\bfe}{{\boldsymbol e}}
\newcommand{\bfo}{{\boldsymbol o}}
\newcommand{\bfr}{{\boldsymbol r}}
\newcommand{\bft}{{\boldsymbol t}}
\newcommand{\bfu}{{\boldsymbol u}}
\newcommand{\bfB}{{\mathbf B}}
\newcommand{\bfC}{{\mathbf C}}
\newcommand{\bfD}{{\mathbf D}}
\newcommand{\bfF}{{\mathbf F}}
\newcommand{\bfG}{{\mathbf G}}
\newcommand{\bfI}{{\mathbf I}}
\newcommand{\bfX}{{\mathbf X}}
\newcommand{\bfY}{{\mathbf Y}}
\newcommand{\cC}{{\cal C}}
\newcommand{\cI}{{\cal I}}
\newcommand{\cO}{{\cal O}}
\newcommand{\cP}{{\cal P}}
\newcommand{\dball}{\ensuremath{\mathbb{D}}}
\newcommand{\iball}{\ensuremath{\mathbb{I}}}
\newcommand{\idball}{\ensuremath{\mathbb{ID}}}
\definecolor{bostonuniversityred}{rgb}{0.8, 0.0, 0.0}
\newcommand{\x}{\mathrm{x}}
\newcommand{\bfXov}{\overline{\mathbf{X}}}
\newcommand{\tensor}{array}
\newcommand{\tensors}{arrays}
\newcommand{\di}{\ensuremath{d}}
\newcommand{\cart}{\ensuremath{\bigotimes}}
\newcommand{\odi}{\ensuremath{\otimes\di}}
\newcommand{\tcode}{\ensuremath{\bft}}
\newcommand{\proj}[1]{\ensuremath{\mathcal{P}_{#1}}}
\newcommand{\invproj}[1]{\ensuremath{\mathcal{P}_{#1}^{-1}}}
\newcommand{\bfone}{\ensuremath{\mathbf{1}}}
\newcommand{\allone}{\ensuremath{\bfone^{(\di)}}}
\newcommand{\allv}[1]{\ensuremath{#1 \allone}}
\newcommand{\insvec}{\ensuremath{\bfu}}
\newcommand{\insentry}{\ensuremath{u}}
\begin{document}

\title{Equivalence of Insertion/Deletion Correcting Codes for $d$-dimensional Arrays 
}

\author{%
	\IEEEauthorblockN{
	    {\textbf{Evagoras Stylianou}},
				{\textbf{Lorenz Welter}},
				{\textbf{Rawad Bitar}},
                {\textbf{Antonia Wachter-Zeh}},
		and {\textbf{Eitan Yaakobi}}
	} 
\thanks{ES, LW, RB and AW-Z are with the ECE department at the Technical University of Munich. EY is with the CS department of Technion --- Israel Institute of Technology. Emails: \{evagoras.stylianou, lorenz.welter, rawad.bitar, antonia.wachter-zeh\}@tum.de, yaakobi@cs.technion.ac.il.}
\thanks{This project has received funding from the European Research Council (ERC) under the European Union’s Horizon 2020 research and innovation programme (grant agreement No. 801434), from the Technical University of Munich - Institute for Advanced Studies, funded by the German Excellence Initiative and European Union Seventh Framework Programme under Grant Agreement No. 291763.}
}

 \IEEEaftertitletext{\vspace{-2ex}}

\maketitle

\begin{abstract}
 We consider the problem of correcting insertion and deletion errors in the $\di$-dimensional space. This problem is well understood for vectors (one-dimensional space) and was recently studied for arrays (two-dimensional space). For vectors and arrays, the problem is motivated by several practical applications such as DNA-based storage and racetrack memories. From a theoretical perspective, it is interesting to know whether the same properties of insertion/deletion correcting codes generalize to the $\di$-dimensional space. In this work, we show that the equivalence between insertion and deletion correcting codes generalizes to the $\di$-dimensional space. As a particular result, we show the following missing equivalence for arrays: a code that can correct $t_\mathrm{r}$ and $t_\mathrm{c}$ row/column deletions can correct any combination of $t_\mathrm{r}^{\mathrm{ins}}+t_\mathrm{r}^{\mathrm{del}}=t_\mathrm{r}$ and $t_\mathrm{c}^{\mathrm{ins}}+t_\mathrm{c}^{\mathrm{del}}=t_\mathrm{c}$ row/column insertions and deletions. The fundamental limit on the redundancy and a construction of insertion/deletion correcting codes in the $\di$-dimensional space remain open for future work.
\end{abstract}

\IEEEpeerreviewmaketitle

\section{Introduction}
Coding for insertions and deletions received a lot of attention due to new applications such as DNA-based data storage \cite{heckel2019characterization,buschmann2013levenshtein}, synchronization errors \cite{helberg1993coding,sala2016synchronizing} and racetrack memories \cite{chee2018coding}. An important notion in this class of codes is the equivalence of insertion and deletion errors. In his original work \cite{levenshtein1966binary}, Levenshtein showed that a code can correct $t$ deletions in a length-$n$ vector if and only if it can correct any combination of $t_{\mathrm{i}}$ insertions and $t_\mathrm{d}$ deletions such that $t_\mathrm{i}+t_\mathrm{d} = t$. A more intuitive proof of the equivalence, which line of thoughts we follow in this work, is given in \cite{Cullina2014}. A code $\cC$ correcting deletions in $q$-ary length-$n$ vectors is evaluated by its redundancy defined as $R \triangleq n- \log_q |\cC|$. The redundancy of $t$-deletion-correcting codes is bounded from below by $t\log_q n - \cO(1)$ \cite{levenshtein1966binary,Cullina2014}. The asymptotical tightness of this bound is shown using the Varshamov-Tenengolts codes \cite{levenshtein1966binary,VarshTene-SingleDeletion1965,tenengolts1984nonbinary} that can correct one deletion. Several recent works considered constructing binary $t$-deletion-correcting codes, $t>1$, whose redundancy approach the previously mentioned lower bound \cite{GuruswamiWang-HighNoiseHighRateDeletions_2017,brakensiek2017efficient,hanna2018guess,Gabrys-TwoDeletions_2018,Sima-TwoDeletions_2019,SimaBruck-kDeletions_2020,guruswami2020explicit,sima2020systematictdel}.

Codes correcting insertions and deletions in two-dimensional arrays have been investigated in \cite{krishnamurthy2019trace,bakirtas2021database,bitar2021criss,chee2021two,hagiwara2020conversion,welter2021multiple}. The model considered in \cite{bitar2021criss,chee2021two,hagiwara2020conversion,welter2021multiple} is that of coding for row/column insertions and deletions in two-dimensional arrays. In \cite{hagiwara2020conversion}, Hagiwara constructed codes that can correct up to $t_c$ column and $t_r$ row deletions where $t_r$ and $t_c$ are predetermined. In \cite{bitar2021criss,welter2021multiple}, the authors constructed codes correcting a variable number of column and row deletions for a predetermined number of total deletions. In addition, they provided a lower bound on the redundancy of codes correcting insertions and deletions in arrays. Moreover, they generalized the equivalence between insertions and deletions across each dimension (columns and rows), separately. More precisely, the authors showed that given an integer $t$, a code can correct $t_r$ and $t_c$, for all $t_r+t_c = t$, row and column deletions if and only if it can correct the same number of rows/columns insertions. However, combinations of insertions and deletions of columns (and rows) was not studied.

In this work we generalize the equivalence between codes correcting insertions and deletions to the $\di$-dimensional space. In this setting, the insertions and deletions are defined as $(\di-1)$-dimensional hyperplane insertions/deletions in a $\di$-dimensional \tensor. In the $\di$-dimensional space there are $\di = \binom{\di}{\di-1}$ different types of $(\di-1)$-hyperplane deletions/insertions. Each type of deletion is indexed by the missing dimension. More precisely, let $(x_1,\dots,x_\di)$ describe the axes of the $\di$-dimensional space. Deleting a $(\di-1)$-dimensional hyperplane not containing the axis $x_i$ is referred to as an $x_i$-deletion. See~\cref{fig:planes2} for an illustrative example for $\di=3$. For a vector $\bft = (t_1,\dots,t_\di)$, a $\bft$-deletion refers to the combination of $t_i$ $x_i$-deletions for $i\in \{1,\dots,\di\}$. We show that a code can correct $\bft$-deletions if and only if it can correct $\bft$-insertions. We extend this result to combinations of insertion and deletions, i.e., we show that a code can correct $\bft$-deletions if and only if it can correct any combination of $\bft^{\mathrm{del}}$-deletions and $\bft^{\mathrm{ins}}$-insertions such that $\bft^{\mathrm{del}}+\bft^{\mathrm{ins}} = \bft$. We show that the number of $x_i$-errors (insertions plus deletions) must remain the same for the equivalence to hold.

\section{Notation and Preliminaries}
Denote the $q$-ary alphabet by $\Sigma_q\triangleq \{0, \dots, q-1\}$ and the set of integers $\{1, \dots , n\}$  by $[n]$. Moreover, denote the set of $\di$-dimensional \tensors, in short called \tensors, by $\Sigma_q^{\cart_{i=1}^{\di} n_i} = \Sigma_q^{n_1 \times \dots \times n_\di} \triangleq \Sigma_q^{n_1} \times \dots \times \Sigma_q^{n_\di}$ with entries in $\Sigma_q$. We abbreviate $\Sigma_q^{n^{\odi}} \triangleq \Sigma_q^{\cart_{i=1}^{\di} n}$, if $n_i=n_j$ for all $i,j\in [\di]$. Let $(x_1,\dots,x_\di)$ describe the axes of the $\di$-dimensional space. For two-dimensional arrays an $x_1$-deletion corresponds to a column deletion and an $x_2$-deletion to a row deletion. See \cref{fig:planes2} for an illustration in the $3$-dimensional space.
\begin{figure}[t]
    \centering
        \subfloat[$x_1$-deletion\label{subfig-1:dummy}]{%
         \begin{tikzpicture}
            \def\sliceZ{0.6}
            \def\side{1.5}
        
              \filldraw[color=gray!40] (\sliceZ,0,0) -- (\sliceZ,\side,0) -- (\sliceZ,\side,\side) -- (\sliceZ,0,\side)-- cycle;
            \draw[dashed]  (\sliceZ,0,0) -- (\sliceZ,\side,0) -- (\sliceZ,\side,\side) -- (\sliceZ,0,\side)-- cycle;
        
            \draw (\side,0,0) -- (\side,\side,0) node[midway,right] {$x_2$} -- (0,\side,0);
            \draw (0,0,\side) -- (\side,0,\side) node[midway,below] {$x_1$} -- (\side,\side,\side) -- (0,\side,\side) -- (0,0,\side);
            \draw (\side,0,0) -- (\side,0,\side) node[midway,below right] {$x_3$};
            \draw (\side,\side,0) -- (\side,\side,\side);
            \draw (0,\side,0) -- (0,\side,\side);
        
            \node at (1,\sliceZ,0.8) {$(x_2,x_3)$-plane};
        \end{tikzpicture}
    }
    \hfill
    \subfloat[$x_2$-deletion\label{subfig-3:dummy}]{%
         \centering
         \begin{tikzpicture}
            \def\sliceZ{0.6}
            \def\side{1.5}
            \filldraw[color=gray!40] (0,\sliceZ,0) -- (0,\sliceZ,\side) -- (\side,\sliceZ,\side) -- (\side,\sliceZ,0) -- cycle;
            \draw[dashed] (0,\sliceZ,0) -- (0,\sliceZ,\side) -- (\side,\sliceZ,\side) -- (\side,\sliceZ,0) -- cycle;
            \draw (\side,0,0) -- (\side,\side,0) node[midway,right] {\small$x_2$} -- (0,\side,0);
            \draw (0,0,\side) -- (\side,0,\side) node[midway,below] {\small$x_1$} -- (\side,\side,\side) -- (0,\side,\side) -- (0,0,\side);
            \draw (\side,0,0) -- (\side,0,\side) node[midway,below right] {\small$x_3$};
            \draw (\side,\side,0) -- (\side,\side,\side);
            \draw (0,\side,0) -- (0,\side,\side);
        
            \node at (0.9,\sliceZ,0.7) {\small $(x_1,x_3)$-plane};
        \end{tikzpicture}
    }
        \hfill
    \subfloat[$x_3$-deletion\label{subfig-2:dummy}]{%
         \centering
         \begin{tikzpicture}
            \def\sliceZ{0.6}
            \def\side{1.5}
              \filldraw[color=gray!40] (0,0,\sliceZ) -- (0,\side,\sliceZ) -- (\side,\side,\sliceZ) -- (\side,0,\sliceZ)-- cycle;
            \draw[dashed]  (0,0,\sliceZ) -- (0,\side,\sliceZ) -- (\side,\side,\sliceZ) -- (\side,0,\sliceZ)-- cycle;
        
            \draw (\side,0,0) -- (\side,\side,0) node[midway,right] {\small$x_2$} -- (0,\side,0);
            \draw (0,0,\side) -- (\side,0,\side) node[midway,below] {\small$x_1$} -- (\side,\side,\side) -- (0,\side,\side) -- (0,0,\side);
            \draw (\side,0,0) -- (\side,0,\side) node[midway,below right] {\small$x_3$};
            \draw (\side,\side,0) -- (\side,\side,\side);
            \draw (0,\side,0) -- (0,\side,\side);
        
            \node at (0.9,0.6,\sliceZ) {\small$(x_1,x_2)$-plane};
        \end{tikzpicture}
        }
    \vspace{-0.5ex}
    \caption{Illustration of all possible plane deletion or insertion in a $3$-dimensional \tensor. \vspace{-0.78cm}}
\label{fig:planes2}
\end{figure}
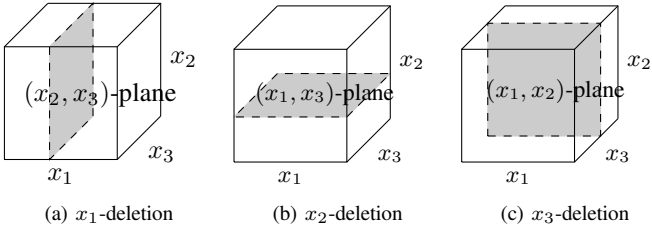
A $\tcode$-deletion where $\tcode \in \mathbb{Z}_{\geq 0}^{\di}$ corresponds to the combination of $t_i$ $x_i$-deletions for $i \in [\di]$, resulting in an \tensor\ $\widetilde{\bfX} \in \Sigma_q^{\cart_{i=1}^\di (n-t_i)}$. 
Moreover, a $\tcode$-insdel where $\tcode = \tcode^{\mathrm{ins}} + \tcode^{\mathrm{del}}$ corresponds to the combination of $\tcode^{\mathrm{del}}$-deletions and $\tcode^{\mathrm{ins}}$-insertions resulting in an \tensor\ $\widetilde{\bfX} \in \Sigma_q^{\cart_{i=1}^\di (n+(t_i^{\mathrm{ins}}-t_i^{\mathrm{del}}))}$.

For $\bfX\in \Sigma_q^{n^{\odi}}$ and $\tcode^{(\di)} \in \mathbb{Z}_{\geq 0}^{\di}$, the set of \tensors\ resulting from a $\tcode^{(d)}$-deletion in $\bfX$ is called the deletion ``ball'' and is denoted by $\dball^\di_{\tcode}(\bfX)$. We define a $\tcode^{(\di)}$-deletion correcting code $\cC \subseteq \Sigma_q^{n^{\odi}}$ as the code that can correct any $\tcode^{(\di)}$-deletion for all $\bfX\in \cC$. The all-zero vector with ``1'' in the $i$-th position is denoted by $\bfe_i$. The $\mathbf{1}^{(\di)}$ denotes the all-one vector of length $\di$. Vectors $\tcode$ of the form $\tcode = (t,\dots,t)$ are denoted by $\allv{t}$. For such vectors we denote the deletion ball by $\dball^\di_{t \bfone}(\bfX)$. The $\tcode$-insertion and the insertion balls $\iball^\di_{\tcode}(\bfX)$ and $\iball^\di_{t \bfone}(\bfX)$ are defined similarly. Moreover, the set of \tensors\ resulting from $\tcode^{(d)}$-insdel in $\bfX$ is called the insertion-deletion ``ball'' and denoted by $\idball_\tcode^\di(\bfX)$.   We define a $\tcode^{(\di)}$-insdel correcting code $\cC \subseteq \Sigma_q^{n^{\odi}}$ as the code that can correct any $\tcode^{(\di)}$-insdel for all $\bfX\in \cC$.
For an integer $t\geq 0$, a $t^{(\di)}$-deletion refers to the collection of all possible $\tcode$-deletions such that $\sum_{i=1}^{\di} t_i = t$. 
We define a $t^{(\di)}$-deletion correcting code $\cC \subseteq \Sigma_q^{n^{\odi}}$ as the code that can correct any $t^{(\di)}$-deletion for all $\bfX\in \cC$. The same notation is used for insertions.  For an integer $a$, we define $\delta_{a}(x)$ to be equal to one if $x=a$ and zero otherwise.

For $j\in [\di]$, the projection $\cP_j$ projects an \tensor\ $\bfX\in \Sigma_q^{\cart_{i=1}^\di n_i}$ along the $x_j$-th axis to an \tensor\ $\cP_j(\bfX) \in \Sigma_{q^{n_j}}^{\cart_{i\in[\di]}^{i\neq j}n_i}$. The projection $\cP_j$ preserves the order of the axes, i.e., it projects $\bfX$ from the space with axes $(x_1,\ldots,x_d)$ onto the space with axes $(x_1,\ldots,x_{j-1},x_{j+1},\ldots,x_d)$.
Moreover, we denote by $\mathcal{P}_j^{-1}$ the inverse projection, or the expansion, of an \tensor\ $\bfX\in \Sigma_{q^{n_j}}^{\cart_{i\in[\di]}^{i\neq j}n_i}$ along the $x_j$-th axis to obtain an \tensor\ $\cP_j^{-1}(\bfX)\in \Sigma_q^{\cart_{i=1}^\di n_i}$. The inverse projection $\cP_{j}^{-1}$ also preserves the order of the dimensions $(x_1,\ldots,x_d)$. 
For example, given an \tensor\ $\bfX\in\Sigma_{q^{n}}^{n\times n}$ in the $(x_1,x_3)$ space, the inverse projection $\cP_2^{-1}(\bfX)$ expands each entry of $\bfX$ to a vector in $\Sigma_{q}^n$ along $x_2$ to obtain $\cP_2^{-1}(\bfX) \in \Sigma_q^{n\times n \times n}$.

Next we state in our notation two preliminary results derived in~\cite{welter2021multiple,bitar2021criss} for the $2$-dimensional case. \cref{lem:2} is used as a building block of our proofs.
\begin{theorem}{\cite[Theorem 1]{welter2021multiple}} \label{thm:twodim} A code $\mathcal{C}\subseteq \Sigma_q^{n\times n}$ is a $t^{(2)}$-deletion correcting code if and only if it is a $t^{(2)}$-insertion correcting code, i.e., for any \tensors\ $\bfX,\bfY \in \Sigma_q^{n\times n}$,\vspace{-0.1cm}
\begin{align}
\mathbb{D}_{\tcode}^2(\mathbf{X}) \cap \mathbb{D}_{\tcode}^2(\mathbf{Y}) \neq \emptyset\;\;\text{if and only if}\;\;\mathbb{I}_{\tcode}^2(\mathbf{X}) \cap \mathbb{I}_{\tcode}^2(\mathbf{Y}) \neq \emptyset. \nonumber
\end{align}
for any choice of $\tcode \in \mathbb{Z}_{\geq0}^2$ such that $t_1+t_2 = t$.
\end{theorem}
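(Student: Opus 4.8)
The plan is to first reduce the coding statement to a statement about confusable pairs, and then settle it by an explicit ``merging'' construction that mirrors the classical one-dimensional argument of Levenshtein and Cullina \cite{levenshtein1966binary,Cullina2014}. A $t^{(2)}$-deletion (resp.\ insertion) correcting code is exactly one whose deletion (resp.\ insertion) balls are pairwise disjoint, so after fixing $\tcode=(t_1,t_2)$ with $t_1+t_2=t$ it suffices to show, for all $\bfX,\bfY\in\Sigma_q^{n\times n}$, that $\bfX$ and $\bfY$ possess a common $(n-t_2)\times(n-t_1)$ subarray obtainable by deleting $t_1$ columns and $t_2$ rows from each if and only if they possess a common $(n+t_2)\times(n+t_1)$ superarray obtainable by inserting $t_1$ columns and $t_2$ rows into each. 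In one dimension this is the assertion that two equal-length words share a common subsequence of the required length iff they share a common supersequence of the dual length, proved by merging the two words along a fixed alignment; I would lift that merge to two dimensions.

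For the forward direction I would fix the row- and column-matchings realizing the common subarray: increasing sets $I_\bfX,I_\bfY$ of $n-t_2$ rows and $J_\bfX,J_\bfY$ of $n-t_1$ columns with $\bfX[I_\bfX,J_\bfX]=\bfY[I_\bfY,J_\bfY]=\bfZ$. I then build the superarray $\bfW$ by the two-dimensional analogue of the merge: interleave the row indices into $n+t_2$ ordered ``row slots'' of three kinds---common (the matched rows), $\bfX$-only (the $t_2$ rows of $\bfX$ outside $I_\bfX$), and $\bfY$-only (the $t_2$ rows of $\bfY$ outside $I_\bfY$)---and likewise interleave the columns into $n+t_1$ slots of kinds common, $\bfX$-only, $\bfY$-only. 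I fill the entries of $\bfW$ by cases: a (common,\,common) cell takes its value from $\bfZ$; any cell lying in an $\bfX$-only row or an $\bfX$-only column that $\bfX$ still reads takes its value from $\bfX$, and symmetrically for $\bfY$; and the two ``cross'' blocks (an $\bfX$-only row with a $\bfY$-only column, and vice versa) are read by neither array and are set arbitrarily.

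The verification is then bookkeeping. Deleting from $\bfW$ the $t_2$ $\bfY$-only rows and the $t_1$ $\bfY$-only columns leaves exactly the slots kept by $\bfX$, and because the merge preserves the original order of $\bfX$'s rows and columns, the surviving array is $\bfX$; symmetrically one recovers $\bfY$, so $\bfW\in\iball^2_{\tcode}(\bfX)\cap\iball^2_{\tcode}(\bfY)$. For the reverse direction, starting from a common superarray $\bfW\in\iball^2_{\tcode}(\bfX)\cap\iball^2_{\tcode}(\bfY)$, I would take the rows of $\bfW$ retained by both $\bfX$ and $\bfY$ (at least $n-t_2$ of them, since each array deletes only $t_2$ rows) and the columns retained by both (at least $n-t_1$), thin these sets down to exactly $n-t_2$ rows and $n-t_1$ columns by discarding commonly-kept lines, and restrict $\bfW$ to them; deleting the corresponding lines from $\bfX$ and from $\bfY$ then uses exactly $t_1$ columns and $t_2$ rows each and yields the same subarray, placing it in $\dball^2_{\tcode}(\bfX)\cap\dball^2_{\tcode}(\bfY)$.

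I expect the main obstacle to be precisely this order-and-count bookkeeping rather than any deep idea: one must check that every cell of $\bfW$ read by $\bfX$ (resp.\ $\bfY$) is forced to a value compatible with $\bfX$ (resp.\ $\bfY$), that the (common,\,common) cells inherited from $\bfZ$ create no conflict, that the cross blocks are genuinely read by neither, and that the number of inserted/deleted lines is \emph{exactly} $t_i$ in each dimension---which in the reverse direction is what forces the deliberate thinning step, since the commonly-kept sets may be strictly larger than required. A valid alternative would be to handle each axis separately by projecting onto the alphabet $\Sigma_{q^{n}}$ via $\cP_1,\cP_2$ and invoking the word-level equivalence of \cite{Cullina2014}; however, the direct merge above is coordinate-symmetric and is the version I expect to generalize verbatim to the $\di$-dimensional setting that is the paper's true target.
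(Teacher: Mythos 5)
Your proposal is correct, but it takes a genuinely different route from the one this statement rests on. The paper does not prove \cref{thm:twodim} at all --- it imports it from \cite{welter2021multiple} --- and both that original proof and the $\di$-dimensional generalizations developed here proceed \emph{locally}: the atomic tool is the asymmetric single-line lemma $\dball_{\bfe_i}^2(\bfX)\cap\dball_{\bfe_j}^2(\bfY)\neq\emptyset \Leftrightarrow \iball_{\bfe_j}^2(\bfX)\cap\iball_{\bfe_i}^2(\bfY)\neq\emptyset$ (\cref{lem:2}, lifted to \cref{lm:dd2}), and the full equivalence is assembled by chaining this lemma over a grid of intermediate \tensors, one insertion or deletion at a time (see the flowcharts in \cref{fig:thmgen,fig:thmproofgen}). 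You instead argue \emph{globally}: fix the row and column matchings realizing the common subarray and perform a one-shot two-dimensional merge --- a product of a row merge and a column merge --- exactly as in the Levenshtein/Cullina supersequence argument. Your construction is sound: the only cells read by both arrays are the (common, common) ones, where the two required values coincide with the corresponding entry of $\bfZ$ because the matchings are order-preserving; the cross blocks are read by neither and may be filled arbitrarily; and your thinning step in the converse correctly handles the fact that the commonly retained lines can number more than $n-t_2$ rows and $n-t_1$ columns, so that each array deletes \emph{exactly} $t_2$ rows and $t_1$ columns. As to what each approach buys: your merge is more elementary and explicit, avoids induction entirely, and, as you observe, lifts verbatim to $\di$ dimensions since the alignment factors across axes --- arguably giving a shorter path to \cref{thm:eqins3dgen} than the paper's two-stage argument (\cref{thm:eqinsdel3d}, the appendix chaining for \cref{thm:eqinsdel3d2}, then the induction on $t'$). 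The paper's chaining is, however, more modular: the same single-line lemma is reused unchanged to prove the mixed insertion/deletion (insdel) equivalence in the last section and accommodates arbitrary interleavings of the error operations, situations a fixed global alignment does not address directly.
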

\begin{lemma}{\cite{bitar2021criss}}\label{lem:2}
For a positive integer $m$ and $i,j\in [2]$, any two \tensors\ $\bfX \in \Sigma_q^{(m+\delta_{1}(i))\times (m+\delta_{2}(i))}$ and $\bfY \in \Sigma_q^{(m+\delta_{1}(j))\times (m+\delta_{2}(j))}$, it holds that\vspace{-0.1cm}
\begin{align*}
\mathbb{D}_{\bfe_i}^2(\mathbf{X}) \cap \mathbb{D}_{\bfe_j}^2(\mathbf{Y}) \neq \emptyset \Leftrightarrow \mathbb{I}_{\bfe_j}^2(\mathbf{X}) \cap \mathbb{I}_{\bfe_i}^2(\mathbf{Y}) \neq \emptyset.
\end{align*}
\end{lemma}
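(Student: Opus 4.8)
The plan is to split on whether the two deletion directions coincide. For $i=j$ the two arrays already have the same shape and both deletions act along the same axis, so the problem is essentially one-dimensional; for $i\neq j$ the deletions act along different axes and a genuinely two-dimensional ``corner'' argument is needed. By the symmetry of the statement under the simultaneous exchange $\bfX\leftrightarrow\bfY$, $i\leftrightarrow j$ (both sides are set intersections, hence invariant under it), the case $(i,j)=(2,1)$ follows from $(i,j)=(1,2)$, so it suffices to treat $i=j$ together with the single cross case $i=1,j=2$.

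For $i=j$, let $k$ be the complementary index in $[2]$ and pass to the projection $\cP_k$, which is a bijection from $\Sigma_q^{(m+1)\times m}$ (resp.\ $\Sigma_q^{m\times(m+1)}$) onto the length-$(m+1)$ vectors over the enlarged alphabet $\Sigma_{q^m}$. Under $\cP_k$ every $x_i$-deletion becomes the deletion of a single symbol and every $x_i$-insertion the insertion of a single symbol, so $\mathbb{D}_{\bfe_i}^2(\bfX)\cap\mathbb{D}_{\bfe_i}^2(\bfY)\neq\emptyset$ holds iff the single-symbol deletion balls of $\cP_k(\bfX),\cP_k(\bfY)$ intersect, and similarly for insertions. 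Since $\cP_k(\bfX),\cP_k(\bfY)$ are equal-length vectors, the claim reduces to the classical one-dimensional Levenshtein equivalence for one insertion versus one deletion \cite{levenshtein1966binary,Cullina2014}; concretely both balls meet iff the two vectors have a common subsequence of length $m$, which by $|\mathrm{SCS}|+|\mathrm{LCS}|=2(m+1)$ is equivalent to their having a common supersequence of length $m+2$.

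For $i=1,j=2$ we have $\bfX\in\Sigma_q^{(m+1)\times m}$ and $\bfY\in\Sigma_q^{m\times(m+1)}$, with the two sides of the claim living in $\Sigma_q^{m\times m}$ and $\Sigma_q^{(m+1)\times(m+1)}$ respectively. The direction ``$\Leftarrow$'' is immediate: given $\bfW\in\mathbb{I}_{\bfe_2}^2(\bfX)\cap\mathbb{I}_{\bfe_1}^2(\bfY)$, delete from $\bfW$ both the column whose removal returns $\bfX$ and the row whose removal returns $\bfY$; these deletions are along different axes and hence commute, and the resulting $m\times m$ array lies in $\mathbb{D}_{\bfe_1}^2(\bfX)\cap\mathbb{D}_{\bfe_2}^2(\bfY)$. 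For ``$\Rightarrow$'', suppose $\bfZ\in\Sigma_q^{m\times m}$ equals $\bfX$ with row $a_0$ deleted and also $\bfY$ with column $b_0$ deleted. I would reconstruct $\bfW\in\Sigma_q^{(m+1)\times(m+1)}$ by re-inserting: prescribe the entries of $\bfW$ off row $a_0$ to copy $\bfY$ (so that deleting row $a_0$ returns $\bfY$) and the entries off column $b_0$ to copy $\bfX$ (so that deleting column $b_0$ returns $\bfX$). Together these determine every entry except the corner $(a_0,b_0)$, which is set arbitrarily, giving $\bfW\in\mathbb{I}_{\bfe_2}^2(\bfX)\cap\mathbb{I}_{\bfe_1}^2(\bfY)$.

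The crux, and the only real obstacle, is the consistency of this reconstruction: on the overlap of the two prescriptions---the entries with row $\neq a_0$ and column $\neq b_0$---one must check that they agree. This is exactly where the hypothesis enters: both the off-row-$a_0$ part of $\bfY$ and the off-column-$b_0$ part of $\bfX$ restrict, on that overlap, to the corresponding entries of the common array $\bfZ$, so the two prescriptions coincide there and $\bfW$ is well defined. The remaining care is purely bookkeeping---tracking the index shifts induced by removing row $a_0$ and column $b_0$---after which the construction closes.
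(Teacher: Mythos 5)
Your proposal is correct, but note that there is nothing in this paper to compare it against: Lemma~\ref{lem:2} is stated here without proof, imported from \cite{bitar2021criss}, so your argument is in effect a self-contained reconstruction of the cited result. It holds up under checking. For $i=j$ the reduction via $\cP_k$ to equal-length vectors over $\Sigma_{q^m}$ is exactly the projection device this paper later formalizes as \cref{claim:dim-proj} and exploits again in \cref{cl:insdel-vec}, and the one-dimensional endpoint (single-deletion balls intersect iff single-insertion balls intersect, via $|\mathrm{SCS}|+|\mathrm{LCS}|=2(m+1)$) is the classical Levenshtein equivalence, so that half is fully aligned with the paper's toolkit. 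The cross case $i=1,j=2$ is the substantive content, and your corner construction closes: writing $\bfZ[r,c]=\bfX[r+\mathbbm{1}(r\ge a_0),c]=\bfY[r,c+\mathbbm{1}(c\ge b_0)]$, the two prescriptions for $\bfW$ on the overlap $\{r'\neq a_0,\,c'\neq b_0\}$ both evaluate to the corresponding entry of $\bfZ$ after the index shifts, so $\bfW$ is well defined, the free corner $(a_0,b_0)$ is harmless, and deleting row $a_0$ (resp.\ column $b_0$) of $\bfW$ returns $\bfY$ (resp.\ $\bfX$). You also correctly produced the swapped memberships $\bfW\in\iball^2_{\bfe_2}(\bfX)\cap\iball^2_{\bfe_1}(\bfY)$, which is precisely the $i\leftrightarrow j$ crossing in the lemma's statement and the easiest detail to fumble; likewise the ``$\Leftarrow$'' direction via commuting deletions along distinct axes, and the symmetry reduction disposing of $(i,j)=(2,1)$, are sound. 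Two cosmetic remarks only: your row/column labels are transposed relative to this paper's convention (here an $x_1$-deletion is called a column deletion), which is immaterial since you track shapes consistently; and in the $i=j$ case it is worth saying explicitly that $\cP_k$ induces a bijection between hyperplane-deletion (insertion) balls and symbol-deletion (insertion) balls, which is what licenses transporting the intersection statement both ways.
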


\section{Symmetric Insertion/Deletion Equivalence}
In this section we prove the following theorem. 

\begin{theorem} \label{thm:eqinsdel3d2} A code $\mathcal{C} \subseteq \Sigma_q^{n^{\odi}}$ is a $t\bfone^{(d)}$-deletion-correcting code if and only if it is a $t\bfone^{(d)}$-insertion-correcting code.
\end{theorem}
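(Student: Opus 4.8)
The plan is to reduce the claim to a statement about error balls and then bootstrap it from the two-dimensional results through projection. As a first step I would note that a code $\cC$ is $t\bfone^{(\di)}$-deletion-correcting exactly when the balls $\mathbb{D}_{t\bfone}^{\di}(\bfX)$ of distinct codewords are pairwise disjoint, and $t\bfone^{(\di)}$-insertion-correcting exactly when the balls $\mathbb{I}_{t\bfone}^{\di}(\bfX)$ are pairwise disjoint. Hence it suffices to establish, for all $\bfX,\bfY\in\Sigma_q^{n^{\odi}}$, that $\mathbb{D}_{t\bfone}^{\di}(\bfX)\cap\mathbb{D}_{t\bfone}^{\di}(\bfY)\neq\emptyset$ if and only if $\mathbb{I}_{t\bfone}^{\di}(\bfX)\cap\mathbb{I}_{t\bfone}^{\di}(\bfY)\neq\emptyset$. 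The base of the whole argument is \cref{thm:twodim}, which is exactly this equivalence for $\di=2$ and the split $\tcode=(t,t)$.

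The engine of the proof is a $\di$-dimensional version of \cref{lem:2}: for any $i,j\in[\di]$ and any size-compatible $\bfX,\bfY$, $\mathbb{D}_{\bfe_i}^{\di}(\bfX)\cap\mathbb{D}_{\bfe_j}^{\di}(\bfY)\neq\emptyset$ if and only if $\mathbb{I}_{\bfe_j}^{\di}(\bfX)\cap\mathbb{I}_{\bfe_i}^{\di}(\bfY)\neq\emptyset$. I would prove this single-operation swap by applying the projection $\cP_k$ successively along every axis $x_k$ with $k\notin\{i,j\}$. Since no operation acts on those axes, each $\cP_k$ is a bijection that carries an $x_i$- or $x_j$-deletion (insertion) to the corresponding deletion (insertion) in one fewer dimension, so after the $\di-2$ projections the two balls live in a two-dimensional array over the enlarged alphabet $\Sigma_{q^{n^{\di-2}}}$ and the claim becomes precisely \cref{lem:2}. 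The projections preserve nonemptiness of the intersections because they are bijective.

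To reach $t$ operations per axis I would prove, by induction on the total number of operations, the following more general statement allowing the two sides to differ: for vectors $\bfa,\bfb\in\mathbb{Z}_{\geq0}^{\di}$ and arrays whose sizes satisfy the compatibility $m_k^{\bfX}-a_k=m_k^{\bfY}-b_k$ for all $k$,
\begin{align}
\mathbb{D}_{\bfa}^{\di}(\bfX)\cap\mathbb{D}_{\bfb}^{\di}(\bfY)\neq\emptyset \iff \mathbb{I}_{\bfb}^{\di}(\bfX)\cap\mathbb{I}_{\bfa}^{\di}(\bfY)\neq\emptyset. \nonumber
\end{align}
Specializing to $\bfa=\bfb=t\bfone^{(\di)}$ and $\bfX,\bfY\in\Sigma_q^{n^{\odi}}$ yields the theorem. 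In the inductive step I would peel a single $x_i$-deletion off $\bfX$ (for some $i$ with $a_i\geq1$), write $\mathbb{D}_{\bfa}^{\di}(\bfX)=\bigcup_{\bfX'\in\mathbb{D}_{\bfe_i}^{\di}(\bfX)}\mathbb{D}_{\bfa-\bfe_i}^{\di}(\bfX')$, apply the induction hypothesis to the reduced pair, and then use the single-operation swap above to move the remaining $x_i$-operation across to the insertion side, completing the commutation square; the $i=j$ instances of the swap reduce to the one-dimensional Levenshtein equivalence along a single axis, while the $i\neq j$ instances carry the genuinely multidimensional content.

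The step I expect to be the main obstacle is precisely this last re-routing: showing that the single-operation swaps compose into a global swap, i.e., that the many single hyperplane insertions and deletions acting on different axes can be reordered consistently into one common super-array (respectively sub-array) of the correct side lengths. In one dimension this confluence is automatic from the identity relating longest common subsequences and shortest common supersequences, but that shortcut does not survive in higher dimensions, which is exactly why the projection-to-two-dimensions device and \cref{lem:2} are needed: they supply the cross-axis commutation that the one-dimensional identity no longer provides. Care is also needed with the bookkeeping of array sizes so that the compatibility condition is maintained at every stage of the induction.
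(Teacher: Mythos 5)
Your proposal is correct in substance but organizes the argument along a genuinely different route than the paper. You share the same two building blocks: the projection reduction (the paper's \cref{claim:dim-proj}) and the $\di$-dimensional single-operation swap (the paper's \cref{lm:dd2}; your one-shot projection along all axes $x_k$, $k\notin\{i,j\}$, down to a two-dimensional array over $\Sigma_{q^{n^{\di-2}}}$ is just the unrolled form of the paper's induction on the dimension). Where you diverge is the composition step. The paper first proves the $\bfone^{(\di)}$ case (\cref{thm:eqinsdel3d}) by filling a grid of intermediate \tensors\ $\bfX^{i,j}$ with repeated applications of \cref{lm:dd2} (\cref{fig:thmgen}), and then obtains \cref{thm:eqinsdel3d2} in the appendix by a chain decomposition (\cref{claim:cor1} and \cref{claim:cor2}): a $t\bfone$-deletion confusion between $\bfX_1$ and $\bfX_{t+1}$ is split into $t$ consecutive $\bfone$-deletion confusions, \cref{thm:eqinsdel3d} is applied to each consecutive pair, and the insertion side is reassembled. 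You instead run a single induction on the total operation count for the asymmetric statement $\dball_{\bfa}^{\di}(\bfX)\cap\dball_{\bfb}^{\di}(\bfY)\neq\emptyset \Leftrightarrow \iball_{\bfb}^{\di}(\bfX)\cap\iball_{\bfa}^{\di}(\bfY)\neq\emptyset$; this is more general and, as a bonus, subsumes the paper's \cref{thm:eqins3dgen} (which the paper proves by a separate induction with the ladders $\bfB_s,\bfC_s,\bfF_s,\bfG_s$), so your route buys a unified proof at the price of heavier size bookkeeping.

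One caution about the step you yourself flag as the obstacle: after peeling $\bfe_i$ off $\bfX$ and applying the hypothesis to $(\bfa-\bfe_i,\bfb)$, you hold $\bfX'\in\dball_{\bfe_i}^{\di}(\bfX)\cap\dball_{\bfb}^{\di}(\bfI')$ and need $\iball_{\bfb}^{\di}(\bfX)\cap\iball_{\bfe_i}^{\di}(\bfI')\neq\emptyset$; that is the instance $(\bfe_i,\bfb)$ of your own statement, i.e., moving one deletion past $|\bfb|$ insertions, which a \emph{single} application of the one-operation swap does not accomplish. Under strong induction this is harmless when $|\bfa|\ge 2$, since $1+\sum_k b_k < \sum_k a_k + \sum_k b_k$, but for $|\bfa|=1$ it is circular as stated; there you must peel from the $\bfY$ side instead, invoking the hypothesis on $(\bfe_i,\bfb-\bfe_j)$ and finishing with one genuine single-single swap from \cref{lm:dd2}. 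Equivalently, you can iterate \cref{lm:dd2} along a ladder of single insertions from $\bfX'$ to $\bfI'$ — which is precisely the commutation pattern the paper's grid and chain figures encode. With that case split (or ladder) made explicit, your proof goes through.
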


To prove~\cref{thm:eqinsdel3d2} we need three intermediate results. In~\cref{claim:dim-proj}, we show that $\bft^{(\di)}$-deletions and $\bft^{(\di)}$-insertions in an \tensor\ $\bfX$ are not affected by the projection $\proj{\kappa}(\bfX)$ and the inverse projection $\invproj{\kappa}(\bfX) $ such that $t_\kappa = 0$. We then extend~\cref{lem:2} to the $\di$-dimensional space, cf.,~\cref{lm:dd2}, and use it as a building block in our proofs. In particular, we use~\cref{lm:dd2} to prove~\cref{thm:eqinsdel3d} showing that a code is a $\allone$-deletion-correcting code if and only if it is a $\allone$-insertion-correcting code.
Having the aforementioned results, proving~\cref{thm:eqinsdel3d2} follows by showing that for any $\mathbf{X},\mathbf{Y}\in \Sigma_q^{n^{\odi}}$, $\mathbb{D}^d_{t\bfone}(\mathbf{X}) \cap \mathbb{D}^d_{t\bfone}(\mathbf{Y}) \neq \emptyset$ if and only if $\mathbb{I}^d_{t\bfone}(\mathbf{X}) \cap \mathbb{I}^d_{t\bfone}(\mathbf{Y}) \neq \emptyset$.
The proof holds by using the exact same steps as in the proof of~\cite[Corollary~2]{bitar2021criss}, but extended to the $d$-dimensional space and is given in Appendix~\ref{sec:appendix}.

We start with the first intermediate result. 
\begin{claim} \label{claim:dim-proj} 
For any two vectors $\bfr_1,\bfr_2 \in \mathbb{N}^{\di}$ such that there exists a $\kappa \in [\di]$ for which $r_{1,\kappa} = r_{2,\kappa} = 0$ and any two \tensors\ $\bfX \in \Sigma_{q}^{\cart_{i=1}^{d}(n+r_{1,i})}$,  $\bfY \in \Sigma_{q}^{\cart_{i=1}^{d}(n+r_{2,i})}$, it holds that,\vspace{-0.08cm}
\begin{align*}
&\dball_{\bfr_1}^{\di}(\bfX) \cap \dball_{\bfr_2}^{\di}(\bfY) \neq \emptyset \Leftrightarrow \\
&\hspace{1cm}\dball_{\proj{\kappa}(\bfr_1)}^{\di-1}(\proj{\kappa}(\bfX)) \cap \dball_{\proj{\kappa}(\bfr_2)}^{\di-1}(\proj{\kappa}(\bfY)) \neq \emptyset,
\end{align*}
where $\kappa$ denotes the $\kappa$-th dimension in the $\di$-dimensional space and $\proj{\kappa}(\bfr_j) \in \mathbb{N}^{\di-1}$ is equal to $\bfr_j$ with the zero deleted in the $\kappa$-th position for $j=1,2$. \\
The same statement holds for the insertion case.
\end{claim}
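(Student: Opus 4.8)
The plan is to prove \cref{claim:dim-proj} by establishing a bijection between deletion-ball intersections in the $\di$-dimensional space and the $(\di-1)$-dimensional space obtained after projecting along the $x_\kappa$ axis. The key structural observation is that a $\bfr_j$-deletion with $r_{j,\kappa}=0$ performs \emph{no} deletions along the $x_\kappa$ axis, so the full $n$-extent along $x_\kappa$ is preserved in both $\bfX$ and $\bfY$. Since $\proj{\kappa}$ collapses exactly the $x_\kappa$ axis into the enlarged alphabet $\Sigma_{q^n}$, and since no hyperplane perpendicular to $x_\kappa$ is ever deleted, the projection and the deletion operations should commute. I would make this precise first.

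First I would show the forward direction ($\Rightarrow$). Suppose $\widetilde{\bfZ} \in \dball_{\bfr_1}^{\di}(\bfX) \cap \dball_{\bfr_2}^{\di}(\bfY)$. Because $r_{1,\kappa}=r_{2,\kappa}=0$, the common descendant $\widetilde{\bfZ}$ still has full extent $n$ along the $x_\kappa$ axis, so $\proj{\kappa}(\widetilde{\bfZ})$ is well-defined. The core step is a commutation lemma: for any \tensor\ $\bfW$ whose deletion pattern leaves the $x_\kappa$ axis untouched, $\proj{\kappa}$ applied to the result of a $\bfr$-deletion equals the $\proj{\kappa}(\bfr)$-deletion applied to $\proj{\kappa}(\bfW)$. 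This holds because a $x_i$-deletion for $i\neq\kappa$ removes a hyperplane that, in the projected array, corresponds to a single $x_i$-hyperplane of $\Sigma_{q^n}$-symbols (the $x_\kappa$-direction fibers are bundled into alphabet symbols and carried along intact). Applying this commutation to both $\bfX$ and $\bfY$ shows $\proj{\kappa}(\widetilde{\bfZ})$ lies in both projected deletion balls, giving the claimed nonempty intersection in dimension $\di-1$.

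For the reverse direction ($\Leftarrow$) I would run the same argument through the inverse projection $\invproj{\kappa}$. Given a common descendant $\widetilde{\bfW}$ in the $(\di-1)$-dimensional intersection, I would show $\invproj{\kappa}(\widetilde{\bfW}) \in \dball_{\bfr_1}^{\di}(\bfX) \cap \dball_{\bfr_2}^{\di}(\bfY)$, using that $\invproj{\kappa}$ and $\proj{\kappa}$ are mutually inverse on arrays with full $x_\kappa$-extent, together with the analogous commutation statement $\invproj{\kappa}(\dball_{\proj{\kappa}(\bfr)}^{\di-1}(\cdot)) = \dball_{\bfr}^{\di}(\invproj{\kappa}(\cdot))$ when the $\kappa$-th component of $\bfr$ is zero. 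The insertion case follows by an identical argument with $\dball$ replaced by $\iball$, since insertions along $x_i$ for $i\neq\kappa$ likewise commute with bundling the $x_\kappa$ fibers into alphabet symbols.

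The main obstacle I expect is making the commutation lemma rigorous in terms of indices: one must carefully verify that deleting the hyperplane perpendicular to $x_i$ in $\bfX$ and then bundling $x_\kappa$-fibers yields exactly the same $(\di-1)$-dimensional array as first bundling and then deleting the corresponding $x_i$-hyperplane, and that this correspondence is order-preserving on the remaining axes (which the paper explicitly guarantees for $\proj{\kappa}$ and $\invproj{\kappa}$). Once the commutation is established for a single $x_i$-deletion, the general $\bfr$-deletion follows by composing the individual deletions, and the equivalence of intersections drops out immediately; so the entire difficulty is concentrated in this bookkeeping step rather than in any combinatorial estimate.
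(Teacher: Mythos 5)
Your proposal is correct and follows essentially the same route as the paper: the paper's proof likewise observes that, since $r_{1,\kappa}=r_{2,\kappa}=0$, the deletions do not disturb the bundling of $q$-ary symbols into $q^n$-ary symbols along $x_\kappa$, so $(\di-1)$-dimensional hyperplane deletions in $\bfX,\bfY$ correspond to $(\di-2)$-dimensional hyperplane deletions in $\proj{\kappa}(\bfX),\proj{\kappa}(\bfY)$, with $\invproj{\kappa}$ recovering the common descendant in the reverse direction. Your explicit commutation lemma $\proj{\kappa}\circ(\bfr\text{-deletion}) = (\proj{\kappa}(\bfr)\text{-deletion})\circ\proj{\kappa}$ is simply a more formal packaging of the argument the paper states in prose.
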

\begin{proof} 
We first prove the ``if'' part. Let $\bfD \in \dball_{\bfr_1}^{\di}(\bfX) \cap \dball_{\bfr_2}^{\di}(\bfY)$ and $\bfD' \in \dball_{\proj{\kappa}(\bfr_1)}^{\di-1}(\proj{\kappa}(\bfX)) \cap \dball_{\proj{\kappa}(\bfr_2)}^{\di-1}(\proj{\kappa}(\bfY))$. The $\kappa$-th dimension is not affected by a deletion in both \tensors\ $\bfX$ and $\bfY$. Therefore, the deletions do not affect the mapping of the $q$-ary symbols to $q^n$-ary symbols along the axis $x_\kappa$, when using the projection function. Thus, the $(\di-1)$-dimensional hyperplane deletions in $\bfX, \bfY$ correspond to $(\di-2)$-dimensional hyperplane deletions in the respective projected \tensors. Hence, we have $\invproj{\kappa}(\bfD') = \bfD$. 

We now prove the ``only if'' part. By expanding the $q^n$-ary symbols to $q$-ary symbols along the $x_\kappa$-th axis, i.e., by applying the inverse projection, the $(\di-2)$-dimensional hyperplane deletions in $\proj{\kappa}(\bfX), \proj{\kappa}(\bfY)$ transform to ${(\di-1)}$-dimensional hyperplane deletions in $\bfX, \bfY$ with no $x_\kappa$-deletions. This follows from the definition of the projections.
\end{proof}

We now state and prove the second intermediate result.
\begin{lemma}\label{lm:dd2} For positive integers $m_1,\dots,m_\di$ and $i,j \in [\di]$, for any two \tensors\ $\bfX \in \Sigma_q^{\cart_{\ell=1}^\di (m_\ell+\delta_\ell(i))}$ and $\bfY \in \Sigma_q^{\cart_{\ell=1}^\di (m_\ell+\delta_\ell(j))}$ it holds that,
\begin{align*}
\dball_{\bfe_i}^\di(\mathbf{X}) \cap \dball_{\bfe_j}^\di(\bfY) \neq \emptyset \Leftrightarrow \iball_{\bfe_j}^\di(\mathbf{X}) \cap \iball_{\bfe_i}^\di(\bfY) \neq \emptyset. 
\end{align*}
\end{lemma}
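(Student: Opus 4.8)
We want to prove Lemma `lm:dd2`, which generalizes `lem:2` (the 2D case) to $d$ dimensions. The claim is that for single insertions/deletions along specific axes $i$ and $j$, the deletion balls of $\bfX$ and $\bfY$ intersect iff the appropriate cross-swapped insertion balls intersect. Let me think about the structure.

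We have $\bfX \in \Sigma_q^{\cart_\ell (m_\ell + \delta_\ell(i))}$, meaning $\bfX$ is "one bigger" along axis $i$, and $\bfY$ is "one bigger" along axis $j$. An $\bfe_i$-deletion removes one hyperplane perpendicular to axis $i$ from $\bfX$, giving something in $\Sigma_q^{\cart_\ell m_\ell}$. Similarly $\bfe_j$-deletion from $\bfY$ gives something in $\Sigma_q^{\cart_\ell m_\ell}$. So these can match.

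On the RHS, $\iball_{\bfe_j}(\bfX)$ inserts along axis $j$ into $\bfX$, and $\iball_{\bfe_i}(\bfY)$ inserts along axis $i$ into $\bfY$. After insertion, $\bfX$ becomes size $m_\ell + \delta_\ell(i) + \delta_\ell(j)$ and $\bfY$ becomes $m_\ell + \delta_\ell(j) + \delta_\ell(i)$ — same dimensions. Good.

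**The key cases.** I need to split into $i = j$ and $i \neq j$.

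**Case $i = j$:** Both $\bfX, \bfY$ are enlarged along axis $i$, deletions are along axis $i$, insertions along axis $i$. This reduces to the one-dimensional Levenshtein equivalence applied along axis $i$ after projection. Use `claim:dim-proj` to project away all axes except $i$: since the deletion/insertion only acts on axis $i$ and all other axes have zero deletions, repeatedly applying `claim:dim-proj` for each $\kappa \neq i$ collapses the problem to a 1-dimensional problem over alphabet $\Sigma_{q^{N}}$ where $N = \prod_{\ell \neq i} m_\ell$. Then the classical Levenshtein/Cullina–Kulkarni result for single insertion/deletion over any alphabet gives the equivalence directly.

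**Case $i \neq j$:** Here the situation is genuinely two-dimensional in axes $i$ and $j$; all other axes are untouched. The plan is to use `claim:dim-proj` to project away every axis $\kappa \notin \{i,j\}$ one at a time (each such $\kappa$ has zero deletions/insertions on both sides), reducing to a 2-dimensional problem over the alphabet $\Sigma_{q^{M}}$ with $M = \prod_{\ell \neq i,j} m_\ell$, in axes $i$ and $j$ only. But this is exactly the setting of `lem:2` (the known 2D result), just with a larger alphabet and with $i,j$ playing the roles of the two coordinates. Invoking `lem:2` finishes the case.

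**Proof plan.** First, observe that the lemma is symmetric/trivial when $i=j$ modulo Levenshtein's 1D result, so dispose of that quickly. For $i \neq j$: fix the set $K = [\di] \setminus \{i,j\}$ of "inert" axes. Apply `claim:dim-proj` iteratively, once for each $\kappa \in K$: at each step the current vectors $\bfr_1 = \bfe_i$, $\bfr_2 = \bfe_j$ (and their insertion analogues) have a zero in coordinate $\kappa$, so `claim:dim-proj` applies and peels off dimension $\kappa$ while preserving the intersection-nonemptiness on both the deletion and insertion sides. After exhausting $K$ we are left with two $2$-dimensional arrays over the enlarged alphabet $\Sigma_{q^M}$, enlarged by one along axes $i$ and $j$ respectively, and the claim becomes precisely `lem:2`. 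Chaining the biconditionals completes the proof.

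**Main obstacle.** The conceptual content is light — the real work is bookkeeping: `claim:dim-proj` is stated for deletions (with a remark that insertions are analogous), so I must track that projecting peels off the correct axis index and that the residual error vectors $\proj{\kappa}(\bfe_i)$ and $\proj{\kappa}(\bfe_j)$ are still the corresponding standard basis vectors in one lower dimension (the nonzero coordinate survives since $\kappa \neq i,j$). I also need `lem:2` and `claim:dim-proj` to be stated over an \emph{arbitrary} alphabet, not just $\Sigma_q$, since after projection the alphabet is $\Sigma_{q^M}$; this is fine because both results hold for any $q$. The only subtlety is confirming the axis-order preservation (`claim:dim-proj` preserves order) so that $i$ and $j$ land in the correct two coordinates of the residual 2D array, matching the hypothesis of `lem:2`. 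I expect this indexing consistency to be the one place requiring care, but no hard computation.
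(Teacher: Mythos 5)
Your proposal is correct and takes essentially the same route as the paper: both reduce the $\di$-dimensional statement to the two-dimensional \cref{lem:2} by peeling off the inert axes one at a time via \cref{claim:dim-proj} (working over the enlarged alphabet $\Sigma_{q^M}$ after projection), the paper merely packaging this iterated peeling as an induction on $\di$ with base case $\di=3$. Your explicit handling of $i=j$ through the one-dimensional Levenshtein/Cullina--Kulkarni equivalence is a case the paper absorbs into its ``without loss of generality'' reduction to $i=1$, $j=2$ (with \cref{lem:2} covering $i=j$ at the base), so there is no substantive difference.
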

\begin{proof}
We only show the ``if'' part. The ``only if'' part is proven similarly. We prove the statement by induction over the dimensions. The two-dimensional case, i.e., $\di = 2$, was already shown in \cite{bitar2021criss} and is recalled in Lemma \ref{lem:2}. To illustrate the proof techniques used in the proof and in this work, we choose the three-dimensional case as the base case of the induction.
Without loss of generality, we show that
\begin{align*}
    \dball^\di_{\bfe_1}(\bfX)\cap \dball^\di_{\bfe_2}(\bfY) \neq\emptyset &\Leftrightarrow \iball^\di_{\bfe_2}(\bfX)\cap \iball^\di_{\bfe_1}(\bfY) \neq\emptyset. 
\end{align*}

\emph{Base case $\di = 3$:}  We show that
\begin{align*}
    \dball^3_{\bfe_1}(\bfX)\cap \dball^3_{\bfe_2}(\bfY) \neq\emptyset &\Leftrightarrow \iball^3_{\bfe_2}(\bfX)\cap \iball^3_{\bfe_1}(\bfY) \neq\emptyset. 
\end{align*}
For $\bfX \in \Sigma_q^{(n+1) \times n \times n}$ and $\bfY \in \Sigma_q^{n \times (n+1) \times n}$ let $\bfD \in \dball^3_{\bfe_1}(\bfX) \cap \dball^3_{\bfe_2}(\bfY)$. Since the deletion does not affect both arrays along the axis $x_3$, then we can project along this axis to transform the given three-dimensional deletion problem to a two-dimensional deletion problem by \cref{claim:dim-proj}. Thus, the $\bfe_1$-deletion in $\bfX$ converts to a row deletion in $\proj{3}(\bfX)$ and the $\bfe_2$-deletion in $\bfY$ to a column deletion in $\proj{3}(\bfY)$. Hence, it holds that $\proj{3}(\bfD) \in \dball^2_{\bfe_1}(\proj{3}(\bfX)) \cap \dball^2_{\bfe_2}(\proj{3}(\bfY))$. By \cref{lem:2}, we have the following statement
\begin{align*}
    &\dball^2_{\bfe_1}(\proj{3}(\bfX))\cap \dball^2_{\bfe_2}(\proj{3}(\bfY)) \neq \emptyset  \\
    &\Leftrightarrow \iball^2_{\bfe_2}(\proj{3}(\bfX))\cap \iball^2_{\bfe_1}(\proj{3}(\bfY)) \neq \emptyset.
\end{align*}
Therefore, there exists a $\proj{3}(\bfI) \in \iball^2_{\bfe_2}(\proj{3}(\bfX)) \cap \iball^2_{\bfe_1}(\proj{3}(\bfY))$. Let $\bfI=\invproj{3}(\proj{3}(\bfI))$, by \cref{claim:dim-proj} the previous statement is equivalent to stating that there exists a $\bfI \in \iball^3_{\bfe_2}(\bfX) \cap \iball^3_{\bfe_1}(\bfY)$. This results from applying the inverse projection $\invproj{3}(\cdot)$ on the respective arrays, transforming the row/column insertions in the two-dimensional space to $\bfe_1$-/$\bfe_2$-insertion in the three-dimensional space; thus concluding the base case.

\emph{Induction hypothesis:} For a positive integer $\di-1$ assume that it holds that
\begin{align*}
\dball_{\bfe_1}^{\di-1}(\bfX) \cap \dball_{\bfe_2}^{\di-1}(\bfY) \neq \emptyset\Leftrightarrow\iball_{\bfe_2}^{\di-1}(\bfX) \cap \iball_{\bfe_1}^{\di-1}(\bfY) \neq \emptyset.
\end{align*}
\emph{Induction step:} Given the induction hypothesis we show that the equivalence holds also for $\di$, i.e., 
\begin{align*}
\dball_{\bfe_1}^{\di}(\bfX) \cap \dball_{\bfe_2}^{\di}(\bfY) \neq \emptyset\Leftrightarrow\iball_{\bfe_2}^{\di}(\bfX) \cap \iball_{\bfe_1}^{\di}(\bfY) \neq \emptyset,
\end{align*}
and let $\bfD\in \dball_{\bfe_1}^{\di}(\bfX) \cap \dball_{\bfe_2}^{\di}(\bfY)$.
To apply~\cref{claim:dim-proj} and use the induction hypothesis, we project the \tensors\ on an axis different than the ones affected by a deletion. For the given case, we have $\di-2$ available axes to project on. Assume we project on the axis $x_\kappa$, where $\kappa \in [\di]\setminus\{1,2\}$. Thus, we transform the $(\di-1)$-dimensional hyperplane deletion in $\bfX$ and $\bfY$ to a $(\di-2)$-dimensional hyperplane deletion in $\proj{\kappa}(\bfX)$ and $\proj{\kappa}(\bfY)$ (c.f. ~\cref{claim:dim-proj}). Therefore, we can write that
\begin{align*}
&\dball_{\bfe_1}^{\di}(\bfX) \cap \dball_{\bfe_2}^{\di}(\bfY) \neq \emptyset,   \\ 
&\Leftrightarrow \dball_{\bfe_1}^{\di-1}(\proj{\kappa}(\bfX)) \cap \dball_{\bfe_2}^{\di-1}(\proj{\kappa}(\bfY)) \neq \emptyset, \\
 & \Leftrightarrow \iball_{\bfe_2}^{\di-1}(\proj{\kappa}(\bfX)) \cap \iball_{\bfe_1}^{\di-1}(\proj{\kappa}(\bfY)) \neq \emptyset,
\end{align*}
where the last equivalence follows from the induction hypothesis. Hence, there exists a $\proj{\kappa}(\bfI) \in \iball^{\di-1}_{\bfe_2}(\proj{\kappa}(\bfX)) \cap \iball^{\di-1}_{\bfe_1}(\proj{\kappa}(\bfY))$. Due to the fact that we have projected on an axis $x_\kappa \neq x_1, x_2$ and given \cref{claim:dim-proj}, we can interpret the $(\di-2)$-dimensional hyperplane insertion in $\proj{\kappa}(\bfX)$ and $\proj{\kappa}(\bfY)$ as a $(\di-1)$-dimensional hyperplane insertion in $\bfX$ and $\bfY$ by applying the inverse projection $\invproj{\kappa}(\cdot)$ to the projected \tensors. By the above observations we conclude that there exists a $\bfI \in \iball_{\bfe_2}^{\di}(\bfX) \cap \iball_{\bfe_1}^{\di}(\bfY)$ if there exists $\bfD\in \dball_{\bfe_1}^{\di}(\bfX) \cap \dball_{\bfe_2}^{\di}(\bfY)$ and conclude the ``if'' part of the proof.
\end{proof}

We now show the equivalence of $\bfone^{(d)}$-insertion and $\bfone^{(d)}$-deletion-correcting codes by using the results of~\cref{claim:dim-proj} and~\cref{lm:dd2}.
\begin{figure}[t]
  \centering
  \newcommand{\LT}{L\ref{lm:dd2}}

\begin{tikzpicture}[
    arr/.style = {rectangle, rounded corners, draw=black,
                           minimum width=8ex, minimum height=2ex,
                           text centered, font=\tiny},
    givenbefore/.style = {arr, fill=color1!30},
    createdfirst/.style = {arr, fill=color2!30},
    giventhen/.style = {arr, fill=color6!30},
    createdsecond/.style = {arr, fill=color4!30},
    thm/.style = {circle, draw=black, fill=color3!30,
                           text centered, font=\tiny},
    arrowdescp/.style = {midway, fill=white, font=\footnotesize}]

    \def\xdist{7.0ex}
    \def\ydist{7.0ex}
    
    \definecolor{color0}{rgb}{0.12156862745098,0.466666666666667,0.705882352941177}
    \definecolor{color1}{rgb}{1,0.498039215686275,0.0549019607843137}
    \definecolor{color2}{rgb}{0.172549019607843,0.627450980392157,0.172549019607843}
    \definecolor{color3}{rgb}{0.83921568627451,0.152941176470588,0.156862745098039}
    \definecolor{color4}{rgb}{0.580392156862745,0.403921568627451,0.741176470588235}
    \definecolor{color6}{rgb}{0.549019607843137,0.337254901960784,0.294117647058824}
    \definecolor{color5}{rgb}{0.890196078431372,0.466666666666667,0.76078431372549}
    

    \node [arr,fill=color0!30] (x03) at (3*\xdist,3*\ydist) {$\bfY \triangleq \bfX^{0,3}$};
    \node [arr,fill=color0!30] (x30) at (-3*\xdist,3*\ydist) {$\bfX \triangleq \bfX^{3,0}$};
    
    \node [arr,fill=color0!30] (x00) at (0*\xdist,0*\ydist) {$\bfD \triangleq \bfX^{0,0}$};
    \node [arr,fill=color0!30] (x33) at (0*\xdist,6*\ydist) {$\bfI \triangleq \bfX^{3,3}$};
    
    \node [givenbefore] (x20) at (-2*\xdist,2*\ydist) {$\bfX^{2,0}$};
    \node [givenbefore] (x10) at (-1*\xdist,1*\ydist) {$\bfX^{1,0}$};
    \node [givenbefore] (x01) at (1*\xdist,1*\ydist) {$\bfX^{0,1}$};
    \node [givenbefore] (x02) at (2*\xdist,2*\ydist) {$\bfX^{0,2}$};
    
    \node [createdfirst] (x11) at (0*\xdist,2*\ydist) {$\bfX^{1,1}$};
    \node [createdfirst] (x21) at (-1*\xdist,3*\ydist) {$\bfX^{2,1}$};
    \node [createdfirst] (x12) at (1*\xdist,3*\ydist) {$\bfX^{1,2}$};
    \node [createdfirst] (x22) at (0*\xdist,4*\ydist) {$\bfX^{2,2}$};
    
    \node [createdsecond] (x31) at (-2*\xdist,4*\ydist) {$\bfX^{3,1}$};
    \node [createdsecond] (x32) at (-1*\xdist,5*\ydist) {$\bfX^{3,2}$};
    \node [createdsecond] (x13) at (2*\xdist,4*\ydist) {$\bfX^{1,3}$};
    \node [createdsecond] (x23) at (1*\xdist,5*\ydist) {$\bfX^{2,3}$};
    
    \draw [-] (x00) -- (x10) node [arrowdescp] {$\bfe_1$};
    \draw [-] (x01) -- (x11) node [arrowdescp] {$\bfe_1$};
    \draw [-] (x02) -- (x12) node [arrowdescp] {$\bfe_1$};
    \draw [-] (x03) -- (x13) node [arrowdescp] {$\bfe_1$};
    \draw [-] (x00) -- (x01) node [arrowdescp] {$\bfe_1$};
    \draw [-] (x10) -- (x11) node [arrowdescp] {$\bfe_1$};
    \draw [-] (x20) -- (x21) node [arrowdescp] {$\bfe_1$};
    \draw [-] (x30) -- (x31) node [arrowdescp] {$\bfe_1$};
    
    \draw [-] (x10) -- (x20) node [arrowdescp] {$\bfe_2$};
    \draw [-] (x11) -- (x21) node [arrowdescp] {$\bfe_2$};
    \draw [-] (x12) -- (x22) node [arrowdescp] {$\bfe_2$};
    \draw [-] (x13) -- (x23) node [arrowdescp] {$\bfe_2$};
    \draw [-] (x01) -- (x02) node [arrowdescp] {$\bfe_2$};
    \draw [-] (x11) -- (x12) node [arrowdescp] {$\bfe_2$};
    \draw [-] (x21) -- (x22) node [arrowdescp] {$\bfe_2$};
    \draw [-] (x31) -- (x32) node [arrowdescp] {$\bfe_2$};
    
    \draw [-] (x20) -- (x30) node [arrowdescp] {$\bfe_3$};
    \draw [-] (x21) -- (x31) node [arrowdescp] {$\bfe_3$};
    \draw [-] (x22) -- (x32) node [arrowdescp] {$\bfe_3$};
    \draw [-] (x23) -- (x33) node [arrowdescp] {$\bfe_3$};
    \draw [-] (x02) -- (x03) node [arrowdescp] {$\bfe_3$};
    \draw [-] (x12) -- (x13) node [arrowdescp] {$\bfe_3$};
    \draw [-] (x22) -- (x23) node [arrowdescp] {$\bfe_3$};
    \draw [-] (x32) -- (x33) node [arrowdescp] {$\bfe_3$};
    
    \foreach \x/\y in {0/1,0/3,0/5,-1/2,-1/4,-2/3,1/2,1/4,2/3}{
        \node [thm] (l00) at (\x*\xdist,\y*\ydist) {\LT};
    }

\end{tikzpicture}
\caption{A flow chart of the proof of Theorem \ref{thm:eqinsdel3d} for $\di =3$. Given a \tensor\ $\bfD \in \dball^\di_{\bfone}(\bfX) \cap \dball^\di_{\bfone}(\bfY)$, we show the existence $\bfI \in \iball^\di_{\bfone}(\bfX) \cap \iball^\di_{\bfone}(\bfY)$. Given the existence of $\bfX$, $\bfY$, $\bfD$, and the orange \tensors\ one can show by \cref{lm:dd2} the existence of the green and purple marked \tensors\ and the \tensor\ $\bfI$. Since $\bfX, \bfY$ and $\bfI$ are connected via the purple \tensors\ as shown one can conclude the equivalence. \vspace{-1ex}}
  \label{fig:thmgen}
\end{figure}
\begin{theorem} \label{thm:eqinsdel3d} A code $\mathcal{C} \subseteq \Sigma_q^{n^{\odi}}$ is a $\bfone^{(d)}$-deletion-correcting code if and only if it is a $\bfone^{(d)}$-insertion-correcting code.
\end{theorem}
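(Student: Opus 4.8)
The plan is to reduce the theorem to a single ball-intersection equivalence and then prove that equivalence by building a two-parameter grid of arrays on which \cref{lm:dd2} is applied one unit square at a time. Recall that $\cC\subseteq\Sigma_q^{n^{\odi}}$ is a $\bfone^{(\di)}$-deletion-correcting code precisely when the balls $\{\dball^\di_{\bfone}(\bfX)\}_{\bfX\in\cC}$ are pairwise disjoint, and analogously for insertions. Hence it suffices to show that for any $\bfX,\bfY\in\Sigma_q^{n^{\odi}}$,
\[
\dball^\di_{\bfone}(\bfX)\cap\dball^\di_{\bfone}(\bfY)\neq\emptyset \iff \iball^\di_{\bfone}(\bfX)\cap\iball^\di_{\bfone}(\bfY)\neq\emptyset .
\]
I will prove the forward implication; the reverse is obtained verbatim, since \cref{lm:dd2} is itself an equivalence.

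First I set up the intermediate arrays. A $\bfone^{(\di)}$-deletion is one $x_k$-deletion for each $k\in[\di]$, and deletions along distinct axes commute, so I may fix the order $\bfe_1,\bfe_2,\dots,\bfe_\di$. Given a common element $\bfD\in\dball^\di_{\bfone}(\bfX)\cap\dball^\di_{\bfone}(\bfY)$, reinserting into $\bfD$ the hyperplanes removed from $\bfX$ (in this order) yields a chain $\bfD=\bfX^{0,0},\bfX^{1,0},\dots,\bfX^{\di,0}=\bfX$, and the analogous chain for $\bfY$ yields $\bfD=\bfX^{0,0},\bfX^{0,1},\dots,\bfX^{0,\di}=\bfY$. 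I then index a full grid by $(a,b)\in\{0,\dots,\di\}^2$ arranged so that $\bfX^{a,b}$ has size $(n-1)+\mathbbm{1}[a\ge k]+\mathbbm{1}[b\ge k]$ along axis $x_k$; thus advancing either coordinate from $k-1$ to $k$ is exactly an $\bfe_k$-operation, and the target $\bfI\defeq\bfX^{\di,\di}$ will lie in $\iball^\di_{\bfone}(\bfX)\cap\iball^\di_{\bfone}(\bfY)$, as can be read off along the top and right borders.

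The core of the argument is to fill the grid from the two given borders up to $\bfX^{\di,\di}$ by induction on $a+b$, creating each node $\bfX^{a,b}$ (with $a,b\ge1$) from the unit square whose bottom corner is $\bfX^{a-1,b-1}$ and whose side corners $\bfX^{a,b-1},\bfX^{a-1,b}$ are already available. By construction $\bfX^{a-1,b-1}\in\dball^\di_{\bfe_a}(\bfX^{a,b-1})\cap\dball^\di_{\bfe_b}(\bfX^{a-1,b})$, and the size formula above shows these two arrays meet the hypotheses of \cref{lm:dd2} with $i=a,\ j=b$ (taking $m_k=\dim_{x_k}(\bfX^{a-1,b-1})$, one checks $m_k+\delta_k(a)$ and $m_k+\delta_k(b)$ reproduce their two sizes). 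The lemma then produces a common insertion-ancestor, which is precisely the top corner $\bfX^{a,b}\in\iball^\di_{\bfe_b}(\bfX^{a,b-1})\cap\iball^\di_{\bfe_a}(\bfX^{a-1,b})$. Iterating over all $\di^2$ squares delivers $\bfI=\bfX^{\di,\di}$, completing the forward implication; this is exactly the flow chart of \cref{fig:thmgen} for $\di=3$.

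I expect the main obstacle to be the bookkeeping that keeps the induction consistent, rather than any single hard estimate. Each created node $\bfX^{a,b}$ must play three roles simultaneously: as the insertion-ancestor produced by its own square, and as a correctly-sized deletion-parent feeding the two squares above-left and above-right of it. The point that makes the induction close is that the single-axis relation $\bfX^{a,b}\in\iball^\di_{\bfe_b}(\bfX^{a,b-1})$ (equivalently $\bfX^{a,b-1}\in\dball^\di_{\bfe_b}(\bfX^{a,b}))$ is exactly what the next square consumes as its bottom-to-side relation; one must verify that the fixed global ordering of the commuting $x_k$-operations makes all these size and adjacency conditions match the hypotheses of \cref{lm:dd2} at every square at once. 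Once this is in place, the reduction to pairwise-disjointness of balls transfers the correcting property from deletions to insertions and back.
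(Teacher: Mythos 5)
Your proposal is correct and follows essentially the same route as the paper's own proof: the same grid of intermediate \tensors\ anchored at $\bfX=\bfX^{\di,0}$, $\bfY=\bfX^{0,\di}$, $\bfD=\bfX^{0,0}$, filled one unit square at a time by \cref{lm:dd2} until $\bfI=\bfX^{\di,\di}\in\iball^\di_{\bfone}(\bfX)\cap\iball^\di_{\bfone}(\bfY)$, exactly as depicted in \cref{fig:thmgen}. Your explicit size bookkeeping (dimension $(n-1)+\mathbbm{1}[a\ge k]+\mathbbm{1}[b\ge k]$ along $x_k$) and the verification that each created node supplies the bottom-to-side relations consumed by the adjacent squares make precise what the paper's induction over $(i,j)$ states more informally.
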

\begin{proof}
We provide an illustration of the proof for the case of $\di =3$ in \cref{fig:thmgen}. Assume there exists an \tensor\ $\bfD \in \Sigma_q^{(n-1)^{\odi}}$ such that $\bfD \in \dball^\di_{\bfone}(\bfX) \cap \dball^\di_{\bfone}(\bfY)$. For simplicity of notation, we fix the order of the deletions in $\bfX$ and $\bfY$ to obtain $\bfD$ to be an $x_\di$-deletion first, then an $x_{\di-1}$-deletion and so on until making an $x_1$-deletion. Note that the proof can be replicated for any ordering by the comprehensiveness of \cref{lm:dd2} which is our main building block. To prove the statement, we build a grid-like structure with axes $i,j \in [\di]$ and \tensors\ as grid points denoted by $\bfX^{i,j}$. We define $\bfX^{\di,0} \triangleq \bfX$, $\bfX^{0,\di} \triangleq \bfY$, and $\bfX^{0,0} \triangleq \bfD$. For fixed $j=0$, let the series of \tensors\ $\{\bfX^{i,0}\}_{i=0}^{\di}$ be defined such that $\bfX^{i-1,0} \in \dball^\di_{\bfe_i}(\bfX^{i,0})$ for $i=\{1,\dots,\di\}$. We define the series of \tensors\ $\{\bfX^{0,j}\}_{j=0}^{\di}$ similarly for fixed $i=0$. The strategy of the proof will show the existence of \tensors\ $\bfX^{i,j}$ for any $i,j \in [\di]$ such that $\bfX^{\di,\di} \in \iball^\di_{\bfone}(\bfX^{\di,0}) \cap \iball^\di_{\bfone}(\bfX^{0,\di})$. 

By the definition of the series we have that $\bfX^{0,0} \in \dball^\di_{\bfe_1}(\bfX^{1,0}) \cap \dball^\di_{\bfe_1}(\bfX^{0,1})$. By \cref{lm:dd2} there exists an \tensor\ $\bfX^{1,1} \in \iball^\di_{\bfe_1}(\bfX^{1,0}) \cap \iball^\di_{\bfe_1}(\bfX^{0,1})$. From that it follows that $\bfX^{1,0} \in \dball^\di_{\bfe_2}(\bfX^{2,0}) \cap \dball^\di_{\bfe_1}(\bfX^{1,1})$. By applying again \cref{lm:dd2} we have that there exists an \tensor\ $\bfX^{2,1} \in \iball^\di_{\bfe_1}(\bfX^{2,0}) \cap \iball^\di_{\bfe_2}(\bfX^{1,1})$. For $j=1$, by repeating the aforementioned strategy we can show the existence of the series of \tensors\ $\{\bfX^{i,1}\}_{i=2}^\di$. Given this series of \tensors\, one can show the existence $\{\bfX^{i,2}\}_{i=1}^\di$, where $j=2$ and given the starting statement $\bfX^{0,1} \in \dball^\di_{\bfe_1}(\bfX^{1,1}) \cap \dball^\di_{\bfe_2}(\bfX^{0,2})$. Therefore by consecutively incrementing $j \in [0,\di-1]$ and for each $j$ incrementing consecutively $i \in [0,\di-1]$, then for each pair $(i,j)$ by \cref{lm:dd2} one has the following equivalence: Given $\bfX^{i,j} \in \dball^\di_{\bfe_i}(\bfX^{i+1,j}) \cap \dball^\di_{\bfe_j}(\bfX^{i,j+1})$ there exists an \tensor\ $\bfX^{i+1,j+1} \in \iball^\di_{\bfe_j}(\bfX^{i+1,j}) \cap \iball^\di_{\bfe_i}(\bfX^{i,j+1})$. Therefore, we have proven the existence of an \tensor\ $\bfX^{\di,\di}\in \iball^\di_{\bfone}(\bfX^{\di,0}) \cap \iball^\di_{\bfone}(\bfX^{0,\di})$ which concludes the proof. 
\end{proof}

\section{Equivalence of insertion and deletions correcting codes: general case}
In this section we show the the equivalence of $\tcode^{(\di)}$-insertions and $\tcode^{(\di)}$-deletions in $\di$-dimensional \tensors\ for any number of $(\di-1)$-dimensional hyperplane insertions and deletions, respectively, i.e., we show the equivalence of $\tcode^{(\di)}$-deletion-correcting codes with  $\tcode^{(\di)}$-insertion-correcting codes for any $\tcode^{(\di)} \in \mathbb{Z}_{\geq 0}^d$. The proof follows similar steps as the one used by the authors in \cite{welter2021multiple} for the two-dimensional case.

\begin{theorem} \label{thm:eqins3dgen} A code $\mathcal{C}\subseteq \Sigma_q^{n^{\odi}}$ is a $\tcode^{(\di)}$-deletion-correcting code if and only if it is a $\tcode^{(\di)}$-insertion-correcting code.
\end{theorem}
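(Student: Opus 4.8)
The plan is to reduce, as is done throughout the paper, to the confusability statement: it suffices to prove that for all $\bfX,\bfY\in\Sigma_q^{n^{\odi}}$,
\[
\dball_{\tcode}^{\di}(\bfX)\cap\dball_{\tcode}^{\di}(\bfY)\neq\emptyset \Leftrightarrow \iball_{\tcode}^{\di}(\bfX)\cap\iball_{\tcode}^{\di}(\bfY)\neq\emptyset.
\]
This is precisely the content of \cref{thm:eqinsdel3d} with the all-one vector $\bfone$ replaced by the arbitrary weight vector $\tcode=(t_1,\dots,t_\di)$. I would therefore reuse the grid argument of \cref{thm:eqinsdel3d} (illustrated in \cref{fig:thmgen}) essentially verbatim; the only structural change is that each of the two boundary deletion paths now consists of $t\defeq\sum_{\ell=1}^{\di}t_\ell$ elementary single-axis deletions instead of $\di$.

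First I would fix a decomposition of the $\tcode$-deletion into single deletions. Pick two orderings $\sigma,\tau\colon[t]\to[\di]$, each taking the value $\ell$ exactly $t_\ell$ times. Given a common deletion $\bfD\in\dball_{\tcode}^{\di}(\bfX)\cap\dball_{\tcode}^{\di}(\bfY)$, I would realize the deletion $\bfX\to\bfD$ as the composition of $\bfe_{\sigma(t)},\dots,\bfe_{\sigma(1)}$-deletions and the deletion $\bfY\to\bfD$ as the composition of $\bfe_{\tau(t)},\dots,\bfe_{\tau(1)}$-deletions. This defines the two boundary families $\{\bfX^{i,0}\}_{i=0}^{t}$ and $\{\bfX^{0,j}\}_{j=0}^{t}$ of a $t\times t$ grid with corners $\bfX^{t,0}=\bfX$, $\bfX^{0,t}=\bfY$ and $\bfX^{0,0}=\bfD$, where $\bfX^{i,j}$ has side length $n-t_\ell+|\{k\leq i:\sigma(k)=\ell\}|+|\{k\leq j:\tau(k)=\ell\}|$ along axis $\ell$.

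Next I would fill the interior one square at a time exactly as in \cref{thm:eqinsdel3d}: from $\bfX^{i,j}\in\dball_{\bfe_{\sigma(i+1)}}^{\di}(\bfX^{i+1,j})\cap\dball_{\bfe_{\tau(j+1)}}^{\di}(\bfX^{i,j+1})$, \cref{lm:dd2} yields the insertion corner $\bfX^{i+1,j+1}\in\iball_{\bfe_{\tau(j+1)}}^{\di}(\bfX^{i+1,j})\cap\iball_{\bfe_{\sigma(i+1)}}^{\di}(\bfX^{i,j+1})$. A short induction over the grid then shows that every vertical step at column $i$ uses axis $\bfe_{\tau(j+1)}$ and every horizontal step at row $j$ uses axis $\bfe_{\sigma(i+1)}$, i.e.\ the axis of a step depends only on its position. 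Consequently the top corner $\bfI\defeq\bfX^{t,t}$ is reached from $\bfX$ by inserting along $\bfe_{\tau(1)},\dots,\bfe_{\tau(t)}$ and from $\bfY$ by inserting along $\bfe_{\sigma(1)},\dots,\bfe_{\sigma(t)}$; since both multisets equal $\tcode$, we obtain $\bfI\in\iball_{\tcode}^{\di}(\bfX)\cap\iball_{\tcode}^{\di}(\bfY)$, which is the ``$\Rightarrow$'' direction. The ``$\Leftarrow$'' direction is the mirror image, using the ``only if'' half of \cref{lm:dd2}.

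I expect the work to lie in the bookkeeping rather than in any new idea, and two points deserve care. Because $\sigma$ and $\tau$ both contain every axis with $t_\ell\geq1$, the grid unavoidably contains degenerate squares where $\sigma(i+1)=\tau(j+1)$ and \cref{lm:dd2} is invoked with coinciding axes; this already happens in \cref{thm:eqinsdel3d} (e.g.\ the square on $\bfX^{0,0},\bfX^{1,0},\bfX^{0,1},\bfX^{1,1}$) and is covered by \cref{lm:dd2}, but to be safe it can be reduced via \cref{claim:dim-proj} — projecting out the $\di-1$ unaffected axes — to the one-dimensional Levenshtein equivalence. Second, one must verify that the side-length formula above keeps the dimension hypotheses of \cref{lm:dd2} (of the form $m_\ell+\delta_\ell(\cdot)$) satisfied at every square and correctly returns $n$ at $\bfX,\bfY$ and $n+t_\ell$ at $\bfI$; with non-uniform $t_\ell$ this is the only place where the generalization beyond the all-one case is visible, and it is a routine check.
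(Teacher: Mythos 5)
Your proposal is correct, but it takes a genuinely different route from the paper. The paper proves \cref{thm:eqins3dgen} in two stages: it first establishes the uniform case (\cref{thm:eqinsdel3d2}) by chaining the $\bfone^{(\di)}$-grid result of \cref{thm:eqinsdel3d} through the decomposition claims of the appendix (\cref{claim:cor1,claim:cor2}), and then treats a general $\tcode^{(\di)} = t\bfone^{(\di)} + \bfc^j$ by induction on the excess $t^\prime = \sum_i c_i$, where the base case threads the one extra $x_\kappa$-deletion around the uniform part via four chains $\{\bfB_s\},\{\bfC_s\},\{\bfF_s\},\{\bfG_s\}$ of applications of \cref{lm:dd2} together with one invocation of \cref{thm:eqinsdel3d2}. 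You instead run the grid argument of \cref{thm:eqinsdel3d} once, on a grid with $t=\sum_\ell t_\ell$ steps per side and boundary paths given by arbitrary single-axis orderings $\sigma,\tau$ with multiplicity $t_\ell$ per axis; the invariant that an edge's axis depends only on its grid position is exactly what the axis swap in \cref{lm:dd2} provides and propagates, your side-length formula verifies the dimension hypotheses of \cref{lm:dd2} at every square, and the degenerate squares with $\sigma(i{+}1)=\tau(j{+}1)$ are legitimately covered because \cref{lm:dd2} does not require distinct axes (the paper itself applies it with coinciding axes at the corner square in the proof of \cref{thm:eqinsdel3d}, so the extra reduction via \cref{claim:dim-proj} you mention is a safe but unnecessary precaution). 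What the comparison buys: your monolithic grid subsumes \cref{thm:eqinsdel3d2} as the special case $t_\ell = t$ for all $\ell$, and with it the appendix machinery, yielding a shorter and more uniform proof whose only inputs are \cref{lm:dd2} and the (tacit, also used by the paper when it fixes deletion orders ``without loss of generality'') fact that a $\tcode$-deletion can be realized as single-axis deletions in any prescribed order; the paper's staged proof, by contrast, keeps each step small, reuses already-established results, and mirrors the two-dimensional development of \cite{welter2021multiple}, at the cost of an extra induction layer over $t^\prime$ and the chain constructions of the base case.
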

\begin{proof}
For notational convenience we define the vector $\bfc^j \triangleq (c_1,\dots,c_{j-1},0,c_{j+1},\dots,c_{\di}) \in \mathbb{N}^{\di}$. 
In this proof, the vector $\tcode^{(\di)}$ can be written as $\tcode^{(\di)} = t_j\bfone^{(d)} + \bfc^j \triangleq \tcode_{j,c}$, where $t_j = \min_{i\in[\di]} t_i$ and $c_i\triangleq t_i-t_j$ for $i\in [\di]$, to emphasize the composition. Without loss of generality, we show the proof for $j=1$, since by symmetry the proof holds for all $j \in [\di]$, and write $t \triangleq t_1$. Let $t'\triangleq \sum_{i=1}^\di c_i$, the proof proceeds by induction over $t'$. 
For simplicity, we fix in some parts of the proof the order of the $x_i$-deletions. That serves for a better presentation of the proofs and incurs no loss of generality. 
In the proof the contraposition is shown, i.e., we show that $\dball^\di_{\tcode_{1,c}}(\bfX) \cap \dball^\di_{\tcode_{1,c}}(\bfY) \neq \emptyset$ if and only if $\iball^\di_{\tcode_{1,c}}(\bfX) \cap \iball^\di_{\tcode_{1,c}}(\bfY) \neq \emptyset$. We only show the ``if'' part since the ``only if'' part follows by using similar arguments.

\emph{Base case $\sum_{i=1}^{\di} c_i = 1$: } For the reader's convenience, a flowchart of the proof for $\di=3$ is presented in~\cref{fig:thmproofgen}. There are $\di-1$ possibilities for $\bfc^1$ such that $\sum_{i=1}^{\di} c_i = 1$. We show the proof steps for $\bfc^1 = \bfe_\kappa$, i.e., there is a combination of $t\bfone^{(d)}$-deletions and an extra $x_\kappa$-deletion for $\kappa \in [\di]$. 

 \begin{figure}[t]
  \centering
   \def\xdist{5.72ex}
     \def\ydist{5.72ex}
\newcommand{\LT}{L\ref{lm:dd2}}
\newcommand{\LI}{L2}
\newcommand{\THRM}{Thm \ref{thm:eqinsdel3d2}.}

\begin{tikzpicture}[scale=1.0,
     arr/.style = {rectangle, rounded corners, draw=black,
                            minimum width=6.5ex, minimum height=0.9ex,
                            text centered, font=\tiny},
     givenbefore/.style = {arr, fill=color1!30},
     createdfirst/.style = {arr, fill=color2!30},
     giventhen/.style = {arr, fill=color6!30},
     createdsecond/.style = {arr, fill=color4!30},
     thm/.style = {circle, draw=black, fill=color3!30,
                            text centered, font=\tiny},
     arrowdescp/.style = {midway, fill=white, font=\tiny}]

\definecolor{color0}{rgb}{0.12156862745098,0.466666666666667,0.705882352941177}
\definecolor{color1}{rgb}{1,0.498039215686275,0.0549019607843137}
\definecolor{color2}{rgb}{0.172549019607843,0.627450980392157,0.172549019607843}
\definecolor{color3}{rgb}{0.83921568627451,0.152941176470588,0.156862745098039}
\definecolor{color4}{rgb}{0.580392156862745,0.403921568627451,0.741176470588235}
\definecolor{color6}{rgb}{0.549019607843137,0.337254901960784,0.294117647058824}
\definecolor{color5}{rgb}{0.890196078431372,0.466666666666667,0.76078431372549}

     \node [arr,fill=color0!30] (x1) at (-1*\xdist,0) {  $\bfX$};
     \node [arr,fill=color0!30] (g0) at (6*\xdist,0) {$\bfY$};
     \node [givenbefore] (ck) at (1*\xdist,-5*\ydist) {$\bfC_{k}$};
     \node [givenbefore] (ckm1) at (1*\xdist,-5*\ydist) {$\bfC_{k}$};
     \node [createdfirst] (bk1) at (0*\xdist,-4*\ydist) {$\bfB_{k-1}$};
      \node [givenbefore] (bk) at (-1*\xdist,-5*\ydist) {$\bfB_{k}$};

     \node [givenbefore] (ckm2) at (2*\xdist,-4*\ydist) {$\bfC_{k-1}$};
      \node [givenbefore] (c3) at (3*\xdist,-3*\ydist) {$\bfC_{3}$};
     \node [givenbefore] (c2) at (4*\xdist,-2*\ydist) {$\bfC_{2}$};
     \node [givenbefore] (c1) at (5*\xdist,-1*\ydist) {$\bfC_{1}$};
     
     \node [arr,fill=color4!30] (gkp1) at (1*\xdist,5*\ydist)
{$\bfG_{k}$};
     \node [arr,fill=color0!30] (gkp2) at (0*\xdist,6*\ydist)
{$\bfG_{k+1}$};
     \node [arr,fill=color0!30,] (ckp1) at (0*\xdist,-6*\ydist)
{$\bfC_{k+1}$};
     \draw [-] (g0) -- (c1) node [arrowdescp] {$\bfe_\kappa$};
     \draw [-] (c1) -- (c2) node [arrowdescp] {$\bfe_{o_1}$};
     \draw [ -] (ckm1) -- (ckp1) node [arrowdescp] {$\bfe_\kappa$};
     \draw [-] (bk) -- (ckp1) node [arrowdescp] {$\bfe_\kappa$};
     \draw [-] (c3) -- (c2) node [arrowdescp] {$\bfe_{o_2}$};
     \draw [dashed] (c3) -- (ckm2);
     \draw [-] (ckm1) -- (ckm2) node [arrowdescp] {$\bfe_{o_{k-1}}$};
     \draw [-] (x1) -- (bk) node [arrowdescp] {\scriptsize $t\bfone$-deletion};

     \node [createdfirst] (bkm1) at (1*\xdist,-3*\ydist) {$\bfB_{k-2}$};
     \node [createdfirst] (b1) at (3*\xdist,-1*\ydist) {$\bfB_{1}$};
     \node [createdfirst] (bkm2) at (2*\xdist,-2*\ydist) {$\bfB_{2}$};
      \node [createdfirst] (f0) at (4*\xdist,0) {$\bfB_0 = \bfF_0$};
      
     \draw [-] (bkm2) -- (c3) node [arrowdescp] {$\bfe_\kappa$};

     \draw [-] (bk1) -- (bk) node [arrowdescp] {$\bfe_\kappa$};
     \draw [-] (bkm1) -- (bk1) node [arrowdescp] {$\bfe_{o_{k-1}}$};
     \draw [-] (bk1) -- (ckm1) node [arrowdescp] {$\bfe_\kappa$};
     \node [thm] (lk) at (1*\xdist,-4*\ydist) {\LT};

     \draw [ dashed] (bkm1) -- (bkm2);
     \draw [-] (bkm1) -- (ckm2) node [arrowdescp] {$\bfe_\kappa$};
      \node [thm] (lkm1) at (3*\xdist,-2*\ydist) {\LI};
     \node [thm] (lkm1) at (0*\xdist,-5*\ydist) {\LI};

        \draw [dashed] (bkm2) -- (b1) node [arrowdescp] {$\bfe_{o_2}$};
        \draw [ -] (f0) -- (b1) node [arrowdescp] {$\bfe_{o_1}$};
        \draw [ -] (b1) -- (c2) node [arrowdescp] {$\bfe_\kappa$};
        \node [thm] (l1) at (4*\xdist,-1*\ydist) {\LI};
        \draw [ -] (f0) -- (c1) node [arrowdescp] {$\bfe_\kappa$};

         \node [giventhen] (fk2) at (-1*\xdist,5*\ydist) {$\bfF_k$};

       \node [giventhen] (fk) at (0*\xdist,4*\ydist) {$\bfF_{k-1}$};
     \draw [ -] (x1) -- (fk2) node [arrowdescp] {\scriptsize $t\bfone$-insertion};
     \node [thm] (lk) at (1.75*\xdist,0*\ydist) {\THRM};

     \node [giventhen] (fkm1) at (1*\xdist,3*\ydist) {$\bfF_{k-2}$};
     \node [giventhen] (fkm2) at (2*\xdist,2*\ydist) {$\bfF_{2}$};
     \node [giventhen] (f1) at (3*\xdist,1*\ydist) {$\bfF_{1}$};
     \draw [ -] (fk) -- (fkm1) node [arrowdescp] {$\bfe_{\insentry_{k-1}}$};
	 \draw [-] (f1) -- (f0) node [arrowdescp] {$\bfe_{\insentry_1}$};
     \draw [dashed] (fkm1) -- (fkm2);
     \draw [dashed] (fkm2) -- (f1) node [arrowdescp] {$\bfe_{\insentry_2}$};

    	 \node [createdsecond] (g1) at (5*\xdist,1*\ydist) {$\bfG_{1}$};
     \node [createdsecond] (gk) at (2*\xdist,4*\ydist) {$\bfG_{k-1}$};
     \node [createdsecond] (gkm1) at (3*\xdist,3*\ydist) {$\bfG_3$};
     \node [createdsecond] (g2) at (4*\xdist,2*\ydist) {$\bfG_{2}$};

	 \draw [ -] (f0) -- (g1) node [arrowdescp] {$\bfe_\kappa$};
     \draw [ -] (g0) -- (g1) node [arrowdescp] {$\bfe_\kappa$};
     \node [thm] (ul0) at (5*\xdist,0*\ydist) {\LT};

          \draw [-] (fk2) -- (fk) node [arrowdescp] {$\bfe_\kappa$};
      \draw [-] (gkp1) -- (gkp2) node [arrowdescp] {$\bfe_\kappa$};
          \draw [-] (fk2) -- (gkp2) node [arrowdescp] {$\bfe_\kappa$};

      \draw [ -] (f1) -- (g2) node [arrowdescp] {$\bfe_\kappa$};
     \draw [ -] (g1) -- (g2) node [arrowdescp] {$\bfe_{\insentry_1}$};
 		\node [thm] (u1) at (4*\xdist,1*\ydist) {\LI};

     \draw [dashed] (gk) -- (gkm1);
     \draw [ -] (fkm2) -- (gkm1) node [arrowdescp] {$\bfe_\kappa$};
     \node [thm] (ukm1) at (3*\xdist,2*\ydist) {\LI};
     \draw [-] (gk) -- (fkm1) node [arrowdescp] {$\bfe_\kappa$};

     \draw [dashed] (gkm1) -- (g2) node [arrowdescp] {$\bfe_{\insentry_2}$};

     \draw [ -] (fk) -- (gkp1) node [arrowdescp] {$\bfe_\kappa$};

     \draw [ -] (gk) -- (gkp1) node [arrowdescp] {$\bfe_{\insentry_{k-1}}$};
     \node [thm] (uk) at (1*\xdist,4*\ydist) {\LT};
    \node [thm] (uk) at (0*\xdist,5*\ydist) {\LT};

\end{tikzpicture}
\caption{A flow chart of the proof of Theorem \ref{thm:eqins3dgen} for $\di =3$. Given an \tensor\ $\bfC_{k+1} \in \dball^\di_{\tcode_{1,c}}(\bfX) \cap \dball^\di_{\tcode_{1,c}}(\bfY)$, we show the existence $\bfG_{k+1} \in \iball^\di_{\tcode_{1,c}}(\bfX) \cap \iball^\di_{\tcode_{1,c}}(\bfY)$. Given the existence of $\bfX$, $\bfY$, $\bfC_{k+1}$ and the orange \tensors\ one can show by \cref{lm:dd2} the existence of the green marked \tensors\ and then by Theorem \ref{thm:eqinsdel3d2} and Lemma \ref{lm:dd2} the existence of brown and purple marked \tensors and the \tensor\ $\bfG_{k+1}$. }\vspace{-0.3cm}
  \label{fig:thmproofgen}
\end{figure} 

For any two \tensors\ $\bfX,\bfY\in \Sigma_q^{n^{\odi}}$, assume there exists a \tensor\ $\bfD$ such that $\bfD \in \dball^\di_{\tcode_{1,c}}(\bfX) \cap \dball^\di_{\tcode_{1,c}}(\bfY)$. Let $k = \di t$, define the \tensor\ $\bfB_k$ such that $\bfB_k \in \dball^\di_{t\bfone}(\bfX)$ and $\bfD \in \dball^\di_{\bfe_\kappa}(\bfB_k)$ due to the choice of $\bfc^1$. For simplicity, we define $\bfC_1 \in \dball^\di_{\bfe_\kappa}(\bfY)$. Let $\bfo \in [\di]^{k}$ denote the vector whose entries $o_i$ denote the series of $x_{o_i}$-deletions to obtain $\bfD$ from $\bfC_1$ and fix $o_k = \kappa$. We define the series of \tensors\ $\{\bfC_s\}_{s=1}^{k+1}$ such that 
\begin{align*}
    \bfC_s \in \begin{cases} \dball^\di_{\bfe_{\kappa}}(\bfC_{s-1}) & \text{if $s = 1$ or $s = k+1$} \\
    \dball^\di_{\bfe_{o_{s-1}}}(\bfC_{s-1}) & \text{otherwise.}
    \end{cases} 
\end{align*}
where $\bfC_0 \triangleq \bfY$ and $\bfC_{k+1} \triangleq \bfD$. We show that there exists a series of \tensors\ $\{ \bfB_s\}_{s=0}^{k-1}$, resulting from hyperplane insertions starting from $\bfB_k$ and leading to an \tensor\ $\bfB_0 \in \Sigma_q^{n^{\odi}}$, such that $\bfB_k \in \dball^\di_{t\bfone}(\bfX) \cap \dball^\di_{t\bfone}(\bfB_{0})$. By the aforementioned definitions we have that $\bfC_{k+1} \in \dball^\di_{\bfe_\kappa}(\bfB_k) \cap \dball^\di_{\bfe_\kappa}(\bfC_k)$. By \cref{lm:dd2}, there exists a $\bfB_{k-1} \in \iball^\di_{\bfe_\kappa}(\bfB_k) \cap \iball^\di_{\bfe_\kappa}(\bfC_k)$. Applying~\cref{lm:dd2} sequentially shows the existence of the series of \tensors\ $\{\bfB_s\}_{s=0}^{k-1}$, i.e., by \cref{lm:dd2} for each $\bfC_{s+1} \in \dball^\di_{\bfe_\kappa}(\bfB_{s}) \cap \dball^\di_{\bfe_{o_s}}(\bfC_{s}) $ there exists a $\bfB_{s-1} \in \iball_{\bfe_{o_s}}(\bfB_{s}) \cap \iball^\di_{\bfe_\kappa}(\bfC_{s}) $ for $s \in \{k,\dots,1\}$.
Hence, we show the existence of an \tensor\ $\bfB_0 \in \Sigma_q^{n^{\odi}}$ such that $\bfB_k \in \dball^\di_{t\bfone}(\bfX) \cap \dball^\di_{t\bfone}(\bfB_0)$.

By~\cref{thm:eqinsdel3d2}, the existence of $\bfB_k$ implies the existence of an \tensor\ $\bfF_k \in \iball^\di_{t\bfone}(\bfX) \cap \iball^\di_{t\bfone}(\bfB_0)$, i.e., obtained by a $t\bfone^{(\di)}$-insertion in $\bfB_0$. Let $\insvec \in [\di]^{k}$ denote the vector whose entries $\insentry_i$ denote the series of $x_{\insentry_i}$-insertions to obtain $\bfF_k$ from $\bfB_0$ and fix $\insentry_k = \kappa$. We define the \tensors\ $\{\bfF_s\}_{s=1}^k$ such that
\begin{align*}
    \bfF_s \in \begin{cases} \iball^\di_{\bfe_{\kappa}}(\bfF_{s-1}) & \text{if $s = k$} \\
    \iball^\di_{\bfe_{\insentry_{s}}}(\bfF_{s-1}) & \text{otherwise,}
    \end{cases}
\end{align*}
where $\bfF_0 \triangleq \bfB_0$. Noting that $\bfC_1 \in \dball^\di_{\bfe_\kappa}(\bfB_0) \cap \dball^\di_{\bfe_\kappa}(\bfY)$ and applying~\cref{lm:dd2}, there exists an 
\tensor\ $\bfG_1 \in \iball^\di_{\bfe_\kappa}(\bfB_0) \cap \iball^\di_{\bfe_\kappa}(\bfY)$, which means that $\bfF_0 \in \dball^\di_{\bfe_{\insentry_1}}(\bfF_1) \cap \dball^\di_{\bfe_\kappa}(\bfG_1)$. By sequentially applying \cref{lm:dd2} we can show the existence of the series of \tensors\ $\{\bfG_s\}_{s=1}^{k+1}$ such that $\bfG_{k+1} \in \iball^\di_{\tcode_{1,c}}(\bfX) \cap \iball^\di_{\tcode_{1,c}}(\bfY)$. Meaning by the fact that $\bfF_{s-1} \in \dball^\di_{\bfe_{\insentry_s}}(\bfF_s) \cap \dball^\di_{\bfe_\kappa}(\bfG_s)$ there exists an \tensor\ $\bfG_{s+1} \in \iball^\di_{\bfe_\kappa}(\bfF_s) \cap \iball^\di_{\bfe_{\insentry_s}}(\bfG_s)$ for $s \in \{1,\dots,k\}$. 
Hence, we have shown that if there exists an \tensor\ $\bfC_{k+1} \in \dball^\di_{\tcode_{1,c}}(\bfX) \cap \dball^\di_{\tcode_{1,c}}(\bfY)$, then there exists an \tensor\ $\bfG_{k+1} \in \iball^\di_{\tcode_{1,c}}(\bfX) \cap \iball^\di_{\tcode_{1,c}}(\bfY)$, which concludes the base case.

\emph{Induction hypothesis: } Given any vector $\bfc^1$ such that 
$ \sum_{i=1}^{d} c_i = t^\prime$, and two \tensors\ $\bfX,\bfY \in \Sigma_q^{n^{\odi}}$ it holds that
\begin{align*}
    &\dball^d_{\tcode_{1,c}}(\bfX) \cap \dball^d_{\tcode_{1,c}}(\bfY) \neq \emptyset\;\Leftrightarrow \iball^d_{\tcode_{1,c}}(\bfX) \cap \iball^d_{\tcode_{1,c}}(\bfY) \neq \emptyset,
\end{align*}
where $\tcode_{1,c} \triangleq t\bfone+\bfc^1$.

\emph{Induction step: } Assume that the induction hypothesis holds for all values $0 \le \sum_{i=1}^{d} c_i = t^\prime$ where $c_1 = 0$. We prove that the hypothesis holds for $ \sum_{i=1}^{\di} c_i + 1 = t^\prime +1$, i.e., by adding an extra hyperplane deletion. Let the extra deletion be an $x_\kappa$-deletion and define $\tcode_{1,c}^\prime =  (t,t+c_2,\dots,t+c_\kappa+1,\dots,t+c_{\di})$.
Assume that there exists an \tensor\ $\bfD$ such that $\bfD \in \dball^\di_{\tcode_{1,c}^\prime}(\bfX) \cap \dball^\di_{\tcode_{1,c}^\prime}(\bfY)$. Let $k^\prime = \di t+t^\prime +1$, then we defined the \tensors\ $\bfB_{k^\prime }$ and $\bfC_{k^\prime }$ such that $\bfB_{k^\prime } \in \dball^d_{\tcode_{1,c}}(\bfX)$ and $\bfC_{k^\prime } \in \dball^d_{\tcode_{1,c}}(\bfY)$. The rest of the proof follows from the base case, by using $k^\prime$ instead of $k$ and therefore is omitted due to space limitations.
\end{proof}

By considering the collections of all $\tcode^{(\di)}$-deletion-correcting codes such that $\sum_{i=1}^{\di} t_i = t$ we have the following corollary.

\begin{corollary}
A code $\mathcal{C}\subseteq \Sigma_q^{n^{\odi}}$ is a $t^{(\di)}$-deletion-correcting code if and only if it is a $t^{(\di)}$-insertion-correcting code.
\end{corollary}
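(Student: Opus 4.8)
The plan is to reduce this collection-level statement to the per-vector equivalence already established in \cref{thm:eqins3dgen}, and then apply that theorem term by term over all weight-$t$ vectors $\tcode$. First I would unfold the definitions: by construction, a $t^{(\di)}$-deletion-correcting code is one that corrects every $\tcode$-deletion with $\sum_{i=1}^{\di} t_i = t$ simultaneously, which is equivalent to requiring that for all distinct codewords $\bfX,\bfY\in\cC$ and all weight-$t$ vectors $\tcode,\tcode'\in\mathbb{Z}_{\geq 0}^{\di}$, the balls $\dball^\di_{\tcode}(\bfX)$ and $\dball^\di_{\tcode'}(\bfY)$ are disjoint. The analogous unfolding holds for insertions with the insertion balls $\iball^\di_{\tcode}(\bfX)$ and $\iball^\di_{\tcode'}(\bfY)$.

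The key observation, and essentially the only content beyond \cref{thm:eqins3dgen}, is a dimensional one: the ball $\dball^\di_{\tcode}(\bfX)$ is contained in $\Sigma_q^{\cart_{i=1}^\di (n-t_i)}$, and the tuple of side lengths $(n-t_1,\dots,n-t_\di)$ determines $\tcode$ uniquely. Hence whenever $\tcode\neq\tcode'$ the two balls live in ambient spaces of different dimensions and their intersection is automatically empty. Consequently the collection-level disjointness condition collapses to the per-vector condition, so that $\cC$ is a $t^{(\di)}$-deletion-correcting code if and only if it is a $\tcode^{(\di)}$-deletion-correcting code for every $\tcode$ with $\sum_{i=1}^{\di} t_i = t$. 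The identical reasoning, using $\iball^\di_{\tcode}(\bfX)\subseteq \Sigma_q^{\cart_{i=1}^\di (n+t_i)}$, gives the corresponding decomposition for insertions.

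Having this decomposition, I would chain the equivalences: $\cC$ is $t^{(\di)}$-deletion-correcting iff $\cC$ is $\tcode^{(\di)}$-deletion-correcting for every weight-$t$ vector $\tcode$, iff (applying \cref{thm:eqins3dgen} to each such $\tcode$ individually) $\cC$ is $\tcode^{(\di)}$-insertion-correcting for every weight-$t$ vector $\tcode$, iff $\cC$ is $t^{(\di)}$-insertion-correcting. This closes the argument in both directions.

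I do not expect a genuinely hard step here; the only subtlety worth stating carefully is the dimensional bookkeeping that justifies splitting the $t^{(\di)}$ condition into the individual $\tcode^{(\di)}$ conditions. Once that decomposition is recorded, the corollary is an immediate consequence of \cref{thm:eqins3dgen} applied vectorwise.
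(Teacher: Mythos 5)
Your proposal is correct and matches the paper's route: the paper states the corollary as an immediate consequence of \cref{thm:eqins3dgen} by ranging over all $\tcode^{(\di)}$ with $\sum_{i=1}^{\di} t_i = t$, exactly the vectorwise application you describe. The only difference is that you make explicit the dimensional bookkeeping (distinct weight-$t$ vectors $\tcode$ give balls in ambient spaces of different side lengths, so cross-intersections are vacuously empty) that the paper leaves implicit, which is a sound and welcome clarification rather than a deviation.
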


\section{Insdel Equivalence}
So far we have only considered the equivalence between insertion and deletion correcting codes. In this section we are going to discuss the equivalence between $\tcode^{(\di)}$-deletion and $\tcode^{(\di)}$-insdel correcting codes. First, we need the following claim.
\begin{claim}\label{cl:insdel-vec}
For positive integers $m_1,\dots,m_\di$, $i \in [\di]$, a vector $\bfr^i = (0,\dots, 0,r_i,0, \dots,0)$, and any two \tensors\ $\bfX, \bfY \in \Sigma_q^{\cart_{\ell=1}^\di m_\ell}$ it holds that
\begin{align*}
\dball_{\bfr^i}^\di(\bfX) \cap \dball_{\bfr^i}^\di(\bfY) \neq \emptyset \Leftrightarrow \idball_{\bfr^i}^\di(\bfX) \cap \idball_{\bfr^i}^\di(\bfY) \neq \emptyset. 
\end{align*}
\end{claim}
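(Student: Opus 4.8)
The plan is to prove the two implications separately. The forward one, $\dball_{\bfr^i}^\di(\bfX) \cap \dball_{\bfr^i}^\di(\bfY) \neq \emptyset \Rightarrow \idball_{\bfr^i}^\di(\bfX) \cap \idball_{\bfr^i}^\di(\bfY) \neq \emptyset$, is immediate: a pure $\bfr^i$-deletion is the special insdel in which all $r_i$ errors are deletions, so $\dball_{\bfr^i}^\di(\bfX) \subseteq \idball_{\bfr^i}^\di(\bfX)$ and $\dball_{\bfr^i}^\di(\bfY) \subseteq \idball_{\bfr^i}^\di(\bfY)$, and hence the deletion-ball intersection is contained in the insdel-ball intersection. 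The reverse implication is the genuine content of the claim, namely that any confusion realizable by a mixture of $x_i$-insertions and $x_i$-deletions can already be realized by $x_i$-deletions alone.

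For the reverse implication the key observation is that $\bfr^i$ acts along the single axis $x_i$ only: every coordinate of $\bfr^i$ other than the $i$-th is zero, so no operation ever touches the remaining $\di-1$ axes. I would therefore collapse the problem to one dimension. Projecting $\bfX$ and $\bfY$ successively along every axis $x_\kappa$ with $\kappa \in [\di]\setminus\{i\}$ turns each tensor into a length-$m_i$ vector $\bfx,\bfy \in \Sigma_{q^M}^{m_i}$, where $M = \prod_{\ell \neq i} m_\ell$, and each inserted or deleted $(\di-1)$-dimensional hyperplane becomes a single symbol of $\Sigma_{q^M}$. Because every term of the insdel ball is a pure combination of $x_i$-insertions and $x_i$-deletions that leaves each axis $x_\kappa$ untouched, the argument of \cref{claim:dim-proj} applies to each such term, so the projection is a length-preserving bijection carrying both the deletion-ball and the insdel-ball intersection conditions to their one-dimensional counterparts. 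After this reduction an $x_i$-deletion is a symbol deletion and an $x_i$-insdel a symbol insertion or deletion, so the claim becomes
\begin{align*}
\dball^{1}_{r_i}(\bfx) \cap \dball^{1}_{r_i}(\bfy) \neq \emptyset \Leftrightarrow \idball^{1}_{r_i}(\bfx) \cap \idball^{1}_{r_i}(\bfy) \neq \emptyset .
\end{align*}

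This one-dimensional statement over the alphabet $\Sigma_{q^M}$ is precisely the classical equivalence between $r_i$-deletion-correcting and $r_i$-insertion/deletion-correcting codes \cite{levenshtein1966binary,Cullina2014}, which I would invoke directly. I expect its reverse half --- converting a confusion that uses insertions into one using deletions only, by passing through a common super- or subsequence of the appropriate length --- to be the one nontrivial step, but this is exactly the content of the cited result. If a self-contained derivation is preferred, one can instead induct on $r_i$: the base case $r_i = 1$ holds because an $\idball_{\bfr^i}^\di$-confusion then forces either a common single $x_i$-deletion or a common single $x_i$-insertion, and these two are equivalent by \cref{lm:dd2} with $i=j$; the inductive step peels off one error along $x_i$ and reapplies \cref{lm:dd2}. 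The one-dimensional citation is, however, the most direct route.
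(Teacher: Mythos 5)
Your proposal is correct and follows essentially the same route as the paper's own proof: project along all axes in $[\di]\setminus\{i\}$ to reduce to vectors over the enlarged alphabet $\Sigma_{q^M}$, invoke the one-dimensional deletion/insdel equivalence of \cite{Cullina2014}, and pull back via the inverse projection, with \cref{claim:dim-proj} justifying that the balls transform correctly. Your observation that the forward direction is immediate from the inclusion $\dball_{\bfr^i}^\di(\bfX) \subseteq \idball_{\bfr^i}^\di(\bfX)$ is a minor simplification over the paper, which handles that direction by the same projection argument.
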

\begin{proof}
We only show the ``if'' part, since the ``only if'' part follows by similar arguments. Let $\bfD \in \dball_{\bfr^i}^\di(\bfX) \cap \dball_{\bfr^i}^\di(\bfY)$. We define a consecutive series of projections of an \tensor\ $\bfX$ along the axes in a set $\cI \subseteq [d]$ by $\proj{\cI}(\bfX)$. Let $\cI = [d] \setminus \{i\}$, we have $\proj{\cI}(\bfX), \proj{\cI}(\bfY) \in \Sigma_{q^{n(d-1)}}^n$. Since we do not project along the axis affected by deletions we can transform the $(d-1)$-hyperplane deletions to symbol deletions in $\proj{\cI}(\bfX), \proj{\cI}(\bfY)$ by Claim \ref{claim:dim-proj}. Thus, there exits a $\proj{\cI}(\bfD) \in \dball_{r_i}^1(\proj{\cI}(\bfX)) \cap \dball_{r_i}^1(\proj{\cI}(\bfY))$ such that $\invproj{\cI}(\proj{\cI}(\bfD)) = \bfD$. Hence, by \cite{Cullina2014} there exists a $\proj{\cI}(\bfI) \in \idball_{r_i}^1(\proj{\cI}(\bfX)) \cap \idball_{r_i}^1(\proj{\cI}(\bfY))$. According to Claim \ref{claim:dim-proj} it follows that there exists a $\invproj{\cI}(\proj{\cI}(\bfI)) = \bfI \in \idball_{\bfr^i}^\di(\bfX) \cap \idball_{\bfr^i}^\di(\bfY)$, since all entries of $\bfr^i$ are zero except the $i$-th position.
\end{proof}
It is important to note that the position of $r_i$ within the vector $\bfr^i$ must remain the same for any equivalence. This means that $x_i$-deletions are only equivalent to $x_i$-insdels and not to $x_j$-insdels, $j\neq i$. We show this idea through a counterexample for two-dimensional arrays.
\begin{counterexample} 
The equivalence of a $(1,0)$-deletion-correcting code and a $(0,1)$-deletion-correcting code does not hold. To show this, we consider two arrays $\bfX, \bfY \in \Sigma^{3\times 3}$ and assume there exists an array $\bfD \in \Sigma^{2\times 3}$ such that $\bfD \in \dball_{1,0}^2(\bfX)\cap\dball_{1,0}^2(\bfY)$ as follows.
\begin{align*}
\bfX = \begin{pmatrix}
1 & 1 & 1 \\
0 & 1 & 0 \\
0 & 1 & 1 
\end{pmatrix},\;\; \bfY = \begin{pmatrix}
1 & 0 & 1 \\
0 & 1 & 0 \\
0 & 0 & 1 
\end{pmatrix},\;\;\bfD = \begin{pmatrix}
1 &  1 \\
0 &  0 \\
0 &  1 
\end{pmatrix},
\end{align*}
where $\bfD$ is obtained by deleting the second column from $\bfX$ and $\bfY$. Since more than one row of $\bfX$ and $\bfY$ are different, we see that  $\dball_{0,1}^2(\bfX)\cap\dball_{0,1}^2(\bfY) = \emptyset$ and therefore the equivalence does not hold.
\end{counterexample}

Given this result, we show that the insertion/deletion equivalence holds if one fixes a number of insdel for each dimension to be deleted.
\begin{lemma}
For positive integers $m_1,\dots,m_\di$, $i \in [\di]$, a vector $\bft=(t_1,\dots,t_\di) \in \mathbb{N}^\di$, and any two arrays $\bfX, \bfY \in \Sigma_q^{\cart_{\ell=1}^\di m_\ell}$ it holds that,
\begin{align*}
\dball_{\bft}^\di(\bfX) \cap \dball_{\bft}^\di(\bfY) \neq \emptyset \Leftrightarrow \idball_{\bft}^\di(\bfX) \cap \idball_{\bft}^\di(\bfY) \neq \emptyset. 
\end{align*}
\end{lemma}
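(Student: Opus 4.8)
The plan is to convert the $\bft$-deletion into a $\bft$-insdel one axis at a time, using \cref{cl:insdel-vec} as the per-axis building block and the commutativity of hyperplane operations along distinct axes to isolate the axis currently being converted. I only describe the ``if'' direction, since \cref{cl:insdel-vec} is biconditional and the ``only if'' direction follows by running the same chain with the reverse implication. The fact I rely on is that operations along different axes commute at the level of reachable \tensors: for a set of axes $S$ and an axis $j\notin S$, the set of \tensors\ obtainable by first deleting along $S$ and then performing an insdel along $x_j$ equals the set obtainable by swapping the two, and likewise two deletions along different axes commute. This holds because the operations act on disjoint hyperplane families, and the paper already invokes such reorderings ``without loss of generality''.

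Starting from a common \tensor\ $\bfD \in \dball_{\bft}^\di(\bfX)\cap\dball_{\bft}^\di(\bfY)$, I would process the axes $j=\di,\di-1,\dots,1$ while maintaining the invariant that there exist \tensors\ $\bfA_j,\bfB_j$ obtained from $\bfX,\bfY$ respectively by an insdel affecting only axes $j+1,\dots,\di$, together with a common \tensor\ $\bfC_j\in\dball^\di_{(t_1,\dots,t_j,0,\dots,0)}(\bfA_j)\cap\dball^\di_{(t_1,\dots,t_j,0,\dots,0)}(\bfB_j)$. The base case $j=\di$ is $\bfA_\di=\bfX$, $\bfB_\di=\bfY$, $\bfC_\di=\bfD$. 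Note that at every stage $\bfA_j$ and $\bfB_j$ lie in the same space: their lengths along axes $1,\dots,j$ equal the original $m_1,\dots,m_j$ (those axes have not been touched), while their lengths along axes $j+1,\dots,\di$ must agree because the common \tensor\ $\bfC_j$, reached from both by deletions confined to axes $1,\dots,j$, pins down the unaffected axis lengths.

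To pass from $j$ to $j-1$, I first apply deletion--deletion commutativity to push the $t_j$ $x_j$-deletions to the end, obtaining $\bfA_j',\bfB_j'$ (deletions along axes $1,\dots,j-1$ only) with $\bfC_j\in\dball^\di_{t_j\bfe_j}(\bfA_j')\cap\dball^\di_{t_j\bfe_j}(\bfB_j')$; as above $\bfA_j',\bfB_j'$ share the same space. Applying \cref{cl:insdel-vec} on axis $j$ then yields a common \tensor\ $\bfC_{j-1}\in\idball^\di_{t_j\bfe_j}(\bfA_j')\cap\idball^\di_{t_j\bfe_j}(\bfB_j')$. Finally I use insdel--deletion commutativity to move this $x_j$-insdel back past the deletions along axes $1,\dots,j-1$, defining $\bfA_{j-1}\in\idball^\di_{t_j\bfe_j}(\bfA_j)$ and $\bfB_{j-1}\in\idball^\di_{t_j\bfe_j}(\bfB_j)$ so that $\bfC_{j-1}\in\dball^\di_{(t_1,\dots,t_{j-1},0,\dots,0)}(\bfA_{j-1})\cap\dball^\di_{(t_1,\dots,t_{j-1},0,\dots,0)}(\bfB_{j-1})$, while $\bfA_{j-1},\bfB_{j-1}$ are now obtained from $\bfX,\bfY$ by an insdel affecting only axes $j,\dots,\di$. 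This re-establishes the invariant.

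When $j=0$ the residual deletion ball is the singleton, $\dball^\di_{\mathbf{0}}(\bfA_0)=\{\bfA_0\}$ and $\dball^\di_{\mathbf{0}}(\bfB_0)=\{\bfB_0\}$, so the common \tensor\ forces $\bfA_0=\bfB_0=:\bfI$; by construction $\bfI\in\idball^\di_{\bft}(\bfX)\cap\idball^\di_{\bft}(\bfY)$, which is exactly the claim. I expect the main obstacle to be the commutativity step: one must argue carefully that swapping an insertion along axis $x_j$ past a deletion along a different axis preserves the reachable set (the inserted hyperplane is resized, but the surviving part of its freely chosen content is still free), and check that the \tensor\ dimensions stay consistent along the whole chain---both of which reduce to the observation, used above, that the common \tensor\ at each stage determines the lengths of all axes not currently under deletion.
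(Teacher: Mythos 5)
Your proof is correct, but it takes a genuinely different route from the paper's. The paper's argument is a two-step global reduction: it normalizes the $\bft$-insdel so that all deletions happen first, writing $\bfX'\in\dball^\di_{\bft^{\mathrm{del}}}(\bfX)$, $\bfY'\in\dball^\di_{\bft^{\mathrm{del}}}(\bfY)$ with the common \tensor\ $\bfI\in\iball^\di_{\bft^{\mathrm{ins}}}(\bfX')\cap\iball^\di_{\bft^{\mathrm{ins}}}(\bfY')$, and then invokes Theorem~\ref{thm:eqins3dgen} once --- the paper's main insertion/deletion equivalence, itself proved by an elaborate induction --- to trade the common insertion for a common deletion, which composes with the $\bft^{\mathrm{del}}$-deletions to give $\bfD\in\dball^\di_{\bft}(\bfX)\cap\dball^\di_{\bft}(\bfY)$. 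You never touch Theorem~\ref{thm:eqins3dgen}: you peel off one axis at a time using only \cref{cl:insdel-vec} (which reduces to the one-dimensional equivalence of \cite{Cullina2014}), glued by exact commutativity of hyperplane operations along distinct axes. Your commutation facts are true and your justification is the right one (an inserted hyperplane pushed past a deletion along another axis is resized, but the surviving freely chosen content remains free; two deletions along distinct axes commute on the nose), and your dimension bookkeeping --- the common \tensor\ pins down the lengths of the axes not currently under deletion, so \cref{cl:insdel-vec} applies to two \tensors\ in the same space --- is precisely the check needed; note also that your chain delivers any prescribed split $\bft=\bft^{\mathrm{ins}}+\bft^{\mathrm{del}}$, which is the reading under which the ``if'' direction is non-trivial. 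The trade-off: the paper's proof is three lines given its heaviest theorem, while yours is self-contained modulo the 1-D result and, in fact, reaches further --- specializing your chain to the pure-insertion split $\bft^{\mathrm{ins}}=\bft$ re-derives Theorem~\ref{thm:eqins3dgen} itself (and hence Theorems~\ref{thm:eqinsdel3d2} and~\ref{thm:eqinsdel3d}) without the paper's grid-style inductions, at the price of stating and proving the reordering facts explicitly, facts the paper also relies on but only implicitly (e.g., when it defines $\bfX',\bfY'$ by putting all deletions first, or fixes deletion orders ``without loss of generality''). One cosmetic difference: you spell out the $\dball\Rightarrow\idball$ direction and defer the reverse to symmetry, whereas the paper does the opposite; both reversals go through with the same machinery.
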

\begin{proof}
We only show the ``only if'' part, since the ``if'' part follows by similar arguments. Let $\bft^{\mathrm{ins}} = (t_1^{\mathrm{ins}},t_2^{\mathrm{ins}},\dots,t_\di^{\mathrm{ins}})$ and $\bft^{\mathrm{del}} = (t_1^{\mathrm{del}},t_2^{\mathrm{del}},\dots,t_\di^{\mathrm{del}})$ such that $\bft = \bft^{\mathrm{ins}} + \bft^{\mathrm{del}}$. Assume that there exists an \tensor\ $\bfI \in \Sigma_q^{\cart_{i=1}^\di (m_i+(t_i^{\mathrm{ins}}-t_i^{\mathrm{del}}))} $ such that $\bfI \in \idball_{\bft}^\di(\bfX) \cap \idball_{\bft}^\di(\bfY)$. The order of deletions and insertions matters here, therefore we define $\bfX^\prime$ and $\bfY^\prime$ to be the \tensors\ resulting from $\bft^{\mathrm{del}}$-deletion, i.e., it holds that $\bfX^\prime \in \dball_{\bft^{\mathrm{del}}}^\di(\bfX)$ and $\bfY^\prime \in \dball_{\bft^{\mathrm{del}}}^\di(\bfY)$.  It then follows that $\bfI \in \iball_{\bft^{\mathrm{ins}}}^\di(\bfX^\prime)\cap\iball_{\bft^{\mathrm{ins}}}^\di(\bfY^\prime) $. By Theorem~\ref{thm:eqins3dgen}, there exists an \tensor\ $\bfD\in  \Sigma_q^{\cart_{i=1}^\di (m_i-(t_i^{\mathrm{ins}}+t_i^{\mathrm{del}}))} $ such that $\bfD \in \dball_{\bft^{\mathrm{ins}}}^\di(\bfX^\prime)\cap\dball_{\bft^{\mathrm{ins}}}^\di(\bfY^\prime) $ and as a result $\bfD \in \dball_{\bft}^\di(\bfX)\cap\dball_{\bft}^\di(\bfY) $.
\end{proof}


\newpage
\bibliographystyle{ieeetr}
\bibliography{bibliography}

\appendix
In this section, we provide a proof of Theorem~\ref{thm:eqinsdel3d2}, i.e., we prove that a code $\mathcal{C} \subseteq \Sigma_q^{n^{\odi}}$ is a $t\bfone^{(d)}$-deletion-correcting code if and only if it is a $t\bfone^{(d)}$-insertion-correcting code.

The proof requires the following intermediate results.
\begin{claim} \label{claim:cor1} For any two \tensors\ $\bfX_1,\bfX_{t+1}\in \Sigma_q^{n^{\odi}}$, 
$\dball^\di_{t\bfone}(\bfX_1) \cap \dball^\di_{t\bfone}(\bfX_{t+1}) \neq \emptyset$ if and only if there exist $t-1$ \tensors\ $\bfX_2,\dots,\bfX_{t} \in  \Sigma_q^{n^{\odi}}$ such that $\dball^\di_{\bfone}(\bfX_i) \cap \dball^\di_{\bfone}(\bfX_{i+1}) \neq \emptyset$ for all $1 \le i \le t$.
\end{claim}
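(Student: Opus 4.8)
The plan is to prove both implications by induction on $t$, reducing each to a single \emph{confluence} step that is itself an application of \cref{lm:dd2}. The base case $t=1$ is immediate: there are no intermediate \tensors\ to produce, and the condition $\dball^\di_{\bfone}(\bfX_1)\cap\dball^\di_{\bfone}(\bfX_2)\neq\emptyset$ is literally the statement $\dball^\di_{t\bfone}(\bfX_1)\cap\dball^\di_{t\bfone}(\bfX_{t+1})\neq\emptyset$ for $t=1$. Throughout I would prove the size-agnostic version, allowing the two endpoint \tensors\ to have any common side length, so that the induction can peel off one deletion round at a time. It is convenient to decompose every $t\bfone$-deletion into $t$ consecutive \emph{rounds}, each round being a single $\bfone$-deletion (one hyperplane per axis), so that a common $t\bfone$-deletion $\bfD$ of $\bfX_1$ and $\bfX_{t+1}$ comes with two descending chains $\bfX_1=\bfA_0,\dots,\bfA_t=\bfD$ and $\bfX_{t+1}=\bfB_0,\dots,\bfB_t=\bfD$ with $\bfA_{j}\in\dball^\di_{\bfone}(\bfA_{j-1})$ and $\bfB_{j}\in\dball^\di_{\bfone}(\bfB_{j-1})$.

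For the ``only if'' direction I would build the chain by interpolating the two round-decompositions into $\bfD$. Reading the chains as insertion chains out of $\bfD$, each level $\ell\in[t]$ (the step enlarging the side length from $n-t+\ell-1$ to $n-t+\ell$) offers an ``$A$-round'' and a ``$B$-round''. I would define $\bfX_i$ to use the $A$-round on the bottom $t-i+1$ levels and the $B$-round on the top $i-1$ levels; this recovers $\bfX_1$ (all $A$) and $\bfX_{t+1}$ (all $B$) at the endpoints. Consecutive \tensors\ $\bfX_i$ and $\bfX_{i+1}$ then agree on every level except level $t-i+1$, where one uses the $A$-round and the other the $B$-round. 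Deleting that single round from both yields the \emph{same} \tensor\ $\bfM_i$ of side length $n-1$, so $\bfM_i\in\dball^\di_{\bfone}(\bfX_i)\cap\dball^\di_{\bfone}(\bfX_{i+1})$ and the chain condition holds.

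For the ``if'' direction I would telescope. Applying the induction hypothesis to the sub-chain $\bfX_1,\dots,\bfX_t$ gives a common $(t-1)\bfone$-deletion $\bfD'$ of $\bfX_1$ and $\bfX_t$, while the last link gives a common $\bfone$-deletion $\bfM_t$ of $\bfX_t$ and $\bfX_{t+1}$. Both $\bfD'$ and $\bfM_t$ are deletions of the single \tensor\ $\bfX_t$, of depths $t-1$ and $1$. If I can produce a common further deletion $\bfD\in\dball^\di_{\bfone}(\bfD')\cap\dball^\di_{(t-1)\bfone}(\bfM_t)$, then $\bfD\in\dball^\di_{t\bfone}(\bfX_1)\cap\dball^\di_{t\bfone}(\bfX_{t+1})$, since $\bfD'\in\dball^\di_{(t-1)\bfone}(\bfX_1)$ and $\bfM_t\in\dball^\di_{\bfone}(\bfX_{t+1})$. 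Thus both directions collapse to the same \emph{confluence statement}: two deletions of one \tensor, of depths $a$ and $b$, admit a common descendant at depth $a+b$ reached from each in $b$ and $a$ further rounds, respectively.

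The hard part is exactly this confluence, and it is where the multidimensional structure bites. Naively ``applying the same round'' to two different \tensors\ is ill-defined: the content of an inserted $x_\kappa$-hyperplane lives in the remaining $\di-1$ axes, whose entries depend on the rounds already applied, so the $A$-rounds and $B$-rounds of the interpolation cannot simply be concatenated, and likewise the two deletions of $\bfX_t$ need not act on disjoint hyperplanes. The resolution is the one already used in \cref{fig:thmgen} and \cref{fig:thmproofgen}: use \cref{lm:dd2} to swap a deletion in one axis past an insertion in a different axis, and arrange these swaps into a staircase/grid so that mismatched rounds are reordered and reconciled one axis at a time, with \cref{claim:dim-proj} handling the bookkeeping when an axis is untouched by the round being moved. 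I would therefore formalize both the interpolation and the confluence by such a grid of \cref{lm:dd2} applications. Verifying that this grid closes up consistently for general $\di$ is the main technical obstacle; once it is in place, the induction on $t$ delivers the claim.
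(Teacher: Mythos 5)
Your backward direction (chain $\Rightarrow$ common $t\bfone$-deletion) is fine, and in fact easier than you make it: hyperplane deletions along the axes commute, and any element of $\dball^\di_{t\bfone}(\bfX)$ is just the subtensor left after removing $t$ indices per axis, so the confluence you need follows from taking per-axis unions of the removed index sets (padding with arbitrary extra indices when the sets overlap); no appeal to \cref{lm:dd2} is needed there. The genuine gap is in the forward direction, and your own text flags it: the interpolation of $A$-rounds and $B$-rounds is ill-defined (the content of an inserted hyperplane depends on the tensor it is inserted into), and you defer the repair to an unspecified ``staircase/grid'' of \cref{lm:dd2} swaps whose consistency for general $\di$ you admit you have not verified. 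That deferred step is the entire content of the claim. Moreover, your assertion that ``both directions collapse to the same confluence statement'' is not right: the pure-deletion confluence (which is the easy union argument above) suffices only for the backward direction, whereas the forward direction must manufacture the full-size intermediates $\bfX_2,\dots,\bfX_t\in\Sigma_q^{n^{\odi}}$ from data living at side length $n-1$, which inherently requires \emph{insertion} steps, i.e., deletion/insertion equivalence rather than deletion confluence.

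The paper closes exactly this hole with a tool you never invoke: \cref{thm:eqinsdel3d}, proved before this claim, which already packages the grid-of-\cref{lm:dd2} argument you were trying to rebuild. Its induction on $t$ peels one $\bfone^{(\di)}$-round off each endpoint, obtaining $\bfXov_1\in\dball^\di_{\bfone}(\bfX_1)$ and $\bfXov_{t+1}\in\dball^\di_{\bfone}(\bfX_{t+2})$ with $\dball^\di_{t\bfone}(\bfXov_1)\cap\dball^\di_{t\bfone}(\bfXov_{t+1})\neq\emptyset$, applies the induction hypothesis at side length $n-1$ to get a chain $\bfXov_1,\dots,\bfXov_{t+1}$ with $\dball^\di_{\bfone}(\bfXov_i)\cap\dball^\di_{\bfone}(\bfXov_{i+1})\neq\emptyset$, and then applies \cref{thm:eqinsdel3d} to each consecutive pair to obtain $\bfX_{i+1}\in\iball^\di_{\bfone}(\bfXov_i)\cap\iball^\di_{\bfone}(\bfXov_{i+1})$; these insertion results are precisely the desired full-size intermediates, since then $\bfXov_i\in\dball^\di_{\bfone}(\bfX_i)\cap\dball^\di_{\bfone}(\bfX_{i+1})$ for all $1\le i\le t+1$. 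To repair your proposal, drop the interpolation and its unverified grid and use this one-level lift via \cref{thm:eqinsdel3d} inside your induction on $t$.
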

\begin{proof} 
We prove the ``if'' part by induction over $t$. The proof for the ``only if'' part follows similarly and is omitted. First, we define the base case of the induction, then the induction hypothesis and finally the induction step.
 
\emph{Base case $t=1$: } This is a trivial case in which the statement is already satisfied, i.e., there are no intermediate \tensors\ since $\dball^\di_{\bfone}(\bfX_1) \cap \dball^\di_{\bfone}(\bfX_{2}) \neq \emptyset$.

\emph{Induction hypothesis: } Assume that the statement holds for a given $t\in[n-2]$. That is, there exist two \tensors\ $\bfX_1,\bfX_{t+1}\in \Sigma_q^{n^{\odi}}$ that satisfy $\dball^\di_{t\bfone}(\bfX_1) \cap \dball^\di_{t\bfone}(\bfX_{t+1}) \neq \emptyset$ and there exist $t-1$ \tensors\ $\bfX_2,\dots,\bfX_{t}\in \Sigma_q^{n^{\odi}}$ such that $\dball^\di_{\bfone}(\bfX_i) \cap \dball^\di_{\bfone}(\bfX_{i+1}) \neq \emptyset$ for all $1 \le i \le t$.

\emph{Induction step: } We show that the statement holds for $t+1$. Let $\bfX_1,\bfX_{t+2}\in \Sigma_q^{n^{\odi}}$ be such that $\dball^\di_{(t+1)\bfone}(\bfX_1) \cap \dball^\di_{(t+1)\bfone}(\bfX_{t+2}) \neq \emptyset$. Define the two \tensors\ $\bfXov_{1},\bfXov_{t+1} \in \Sigma_q^{(n-1)^{\odi}}$ that result from $\bfone^{(\di)}$-deletion form $\bfX_1$ and $\bfX_{t+2}$ respectively, i.e., $\bfXov_{1} \in \dball^\di_{\bfone}(\bfX_1) $ and $\bfXov_{t+1} \in \dball^\di_{\bfone}(\bfX_{t+2})$ and  $\dball^\di_{t\bfone}(\bfXov_{1}) \cap \dball^\di_{t\bfone}(\bfXov_{t+1}) \neq \emptyset$ . Then, by the induction hypothesis, there exist $t-1$ \tensors\ $\bfXov_2,\dots,\bfXov_{t} \in \Sigma_q^{(n-1)^{\odi}} $ such that $\dball^\di_{\bfone}(\bfXov_i) \cap \dball^\di_{\bfone}(\bfXov_{i+1}) \neq \emptyset$ for all $1 \le i \le t$.

Then, by Theorem~\ref{thm:eqinsdel3d} we deduce that $\iball^\di_{\bfone}(\bfXov_i) \cap \iball^\di_{\bfone}(\bfXov_{i+1}) \neq \emptyset$ for $1 \le i \le t$, therefore, there exist $t$ \tensors\ $\bfX_2,\dots,\bfX_{t+1} \in \Sigma_q^{n^{\odi}}$ such that for all $2 \le i \le t+1$, it holds that $\bfX_i \in \iball^\di_{\bfone}(\bfXov_{i-1}) \cap \iball^\di_{\bfone}(\bfXov_{i})$. By definition, $\bfX_1 \in \iball^\di_{\bfone}(\bfXov_{1}) $ and $ \bfX_{t+2} \in \iball^\di_{\bfone}(\bfXov_{t+1})$, combine with $\bfX_2 \in \iball^\di_{\bfone}(\bfXov_{1})$ and $\bfX_{t+1} \in \iball^\di_{\bfone}(\bfXov_{t+1})$ derived from the aforementioned result, it holds that $\bfXov_{1} \in \dball^\di_{\bfone}(\bfX_1) \cap \dball^\di_{\bfone}(\bfX_{2})$ and $\bfXov_{t+1} \in \dball^\di_{\bfone}(\bfX_{t+1}) \cap \dball^\di_{\bfone}(\bfX_{t+2})$. Consequently, we showed that for $1 \le i \le t+1$ it holds that,
\begin{align}
    \bfXov_{i} \in \dball^\di_{\bfone}(\bfX_i) \cap \dball^\di_{\bfone}(\bfX_{i+1}) .
\end{align}
This completes the ``if'' part of the proof.
\end{proof}
\noindent Next we state a similar result for the insertion case.
\begin{claim}\label{claim:cor2} For any two \tensors\ $\bfX_1,\bfX_{t+1}\in \Sigma_q^{n^{\odi}}$, $\iball^\di_{t\bfone}(\bfX_1) \cap \iball^\di_{t\bfone}(\bfX_{t+1}) \neq \emptyset$ if and only if there exist $t-1$ \tensors\ $\bfX_2,\dots,\bfX_{t}$ such that $\iball^\di_{\bfone}(\bfX_i) \cap \iball^\di_{\bfone}(\bfX_{i+1}) \neq \emptyset$ for all $1 \le i \le t$.
\end{claim}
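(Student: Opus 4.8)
The plan is to establish this claim as the exact insertion-analogue of \cref{claim:cor1}. The only external ingredient in the proof of \cref{claim:cor1} is \cref{thm:eqinsdel3d}, which provides the \emph{symmetric} single-hyperplane equivalence $\dball^\di_{\bfone}(\bfU) \cap \dball^\di_{\bfone}(\bfV) \neq \emptyset \Leftrightarrow \iball^\di_{\bfone}(\bfU) \cap \iball^\di_{\bfone}(\bfV) \neq \emptyset$. Because this equivalence is unchanged when the roles of $\dball$ and $\iball$ are interchanged, the whole argument of \cref{claim:cor1} transports after swapping every deletion ball for the corresponding insertion ball; I would therefore proceed by induction on $t$, mirroring \cref{claim:cor1} but peeling an \emph{insertion} at each step rather than a deletion. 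It is essential \emph{not} to shortcut through \cref{thm:eqinsdel3d2}: that theorem is exactly what \cref{claim:cor1}, \cref{thm:eqinsdel3d}, and the present claim are meant to combine to prove in the appendix, so invoking it here would be circular, and the induction cannot be bypassed.

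Concretely, the base case $t=1$ is immediate, since the hypothesis $\iball^\di_{\bfone}(\bfX_1) \cap \iball^\di_{\bfone}(\bfX_2) \neq \emptyset$ coincides with the conclusion and no intermediate \tensors\ are needed. For the step I would take $\bfX_1, \bfX_{t+2} \in \Sigma_q^{n^{\odi}}$ with $\iball^\di_{(t+1)\bfone}(\bfX_1) \cap \iball^\di_{(t+1)\bfone}(\bfX_{t+2}) \neq \emptyset$, fix a common \tensor\ $\bfZ$ in that intersection, and split each of the two $(t+1)\bfone^{(\di)}$-insertions into a single $\bfone^{(\di)}$-insertion followed by a $t\bfone^{(\di)}$-insertion. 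This yields $\bfXov_1 \in \iball^\di_{\bfone}(\bfX_1)$ and $\bfXov_{t+1} \in \iball^\di_{\bfone}(\bfX_{t+2})$, both in the \emph{larger} space $\Sigma_q^{(n+1)^{\odi}}$, with $\bfZ \in \iball^\di_{t\bfone}(\bfXov_1) \cap \iball^\di_{t\bfone}(\bfXov_{t+1})$. Applying the induction hypothesis (taken to hold for every ambient side length) in dimension $n+1$ produces \tensors\ $\bfXov_2, \dots, \bfXov_t$ with $\iball^\di_{\bfone}(\bfXov_i) \cap \iball^\di_{\bfone}(\bfXov_{i+1}) \neq \emptyset$ for $1 \le i \le t$. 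Passing each single-insertion link through \cref{thm:eqinsdel3d} gives $\dball^\di_{\bfone}(\bfXov_i) \cap \dball^\di_{\bfone}(\bfXov_{i+1}) \neq \emptyset$, whose common \tensors\ $\bfX_2, \dots, \bfX_{t+1}$ lie back in $\Sigma_q^{n^{\odi}}$. Re-reading $\bfX_i \in \dball^\di_{\bfone}(\bfXov_{i-1}) \cap \dball^\di_{\bfone}(\bfXov_i)$ as $\bfXov_{i-1}, \bfXov_i \in \iball^\di_{\bfone}(\bfX_i)$, and adjoining the endpoint memberships $\bfXov_1 \in \iball^\di_{\bfone}(\bfX_1)$ and $\bfXov_{t+1} \in \iball^\di_{\bfone}(\bfX_{t+2})$, assembles the desired chain $\bfXov_j \in \iball^\di_{\bfone}(\bfX_j) \cap \iball^\di_{\bfone}(\bfX_{j+1})$ for all $1 \le j \le t+1$. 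The converse direction is obtained by running this construction in reverse, exactly as the omitted direction of \cref{claim:cor1}.

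I expect the one genuine difficulty to be the index bookkeeping in the reassembly step: each \tensor\ $\bfX_i$ returned by \cref{thm:eqinsdel3d} is shared by the two \emph{consecutive} single-step balls around $\bfXov_{i-1}$ and $\bfXov_i$, and one must check that these fit together into a single coherent chain with the prescribed endpoints $\bfX_1, \bfX_{t+2}$ rather than a loose collection of pairwise relations; the shift of index by one ($j = i-1$) when converting $\dball$-memberships back to $\iball$-memberships is precisely where a careless argument would slip. A pleasant side remark is that, whereas the deletion case of \cref{claim:cor1} carries the restriction $t \le n-2$ to keep the intermediate \tensors\ non-degenerate, insertion peeling has no analogous upper bound, so the induction here is valid for every $t \ge 1$ with no constraint relating $t$ and $n$.
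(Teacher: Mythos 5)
Your proposal is correct and is exactly the argument the paper intends: its entire proof of \cref{claim:cor2} is the one-line remark that it ``follows using similar statements as in \cref{claim:cor1},'' and your mirrored induction --- peeling one $\bfone^{(\di)}$-insertion off each endpoint, invoking the induction hypothesis in the enlarged ambient space $\Sigma_q^{(n+1)^{\odi}}$, and converting the single-step links via \cref{thm:eqinsdel3d} (correctly avoiding the circular appeal to \cref{thm:eqinsdel3d2}) --- is precisely that transported proof, written out in full. Your side observations (the index shift in reassembly, and the absence of the $t\le n-2$ constraint in the insertion direction) are also accurate.
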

\begin{proof} Follows using similar statements as in Claim \ref{claim:cor1}.
\end{proof}
Theorem~\ref{thm:eqinsdel3d2} can be proven using the results of Claim~\ref{claim:cor1} and Claim~\ref{claim:cor2} as follows. For any two \tensors\ $\bfX_1,\bfX_{t+1}\in \Sigma_q^{n^{\odi}}$, if  $\dball^\di_{t\bfone}(\bfX_1) \cap \dball^\di_{t\bfone}(\bfX_{t+1}) \neq \emptyset$ then form Claim~\ref{claim:cor1} we know that there exist $t-1$ \tensors\ $\bfX_2,\dots,\bfX_{t}\in \Sigma_q^{n^{\odi}}$ such that $\dball^\di_{1}(\bfX_i) \cap \dball^\di_{1}(\bfX_{i+1}) \neq \emptyset$ for all $1 \le i \le t$. According to Theorem~\ref{thm:eqinsdel3d}, there exist $t$ arrays $\bfXov_1,\dots,\bfXov_{t} \in \Sigma_q^{(n+1)^{\odi}} \in \Sigma_q^{(n^{\odi}}$ such that for all $1 \le i \le t$,
\begin{align}
        \bfXov_{i} \in \iball^\di_{\bfone}(\bfX_i) \cap \iball^\di_{\bfone}(\bfX_{i+1}) .
\end{align}
Finally, by applying Claim~\ref{claim:cor2} we conclude that $\iball^\di_{t\bfone}(\bfX_1) \cap \iball^\di_{t\bfone}(\bfX_{t+1}) \neq \emptyset $. The ``only if'' part follows similarly.
\label{sec:appendix}
\end{document}